\documentclass{article}
\usepackage[T1]{fontenc}
\usepackage[english]{babel}
\usepackage{csquotes}
\usepackage{graphicx}
\usepackage{amsmath}
\usepackage{amssymb}
\usepackage{array}
\usepackage{float}
\usepackage{amsthm}
\usepackage{tikz}
\usetikzlibrary{calc, arrows.meta}
\usetikzlibrary{positioning}
\usetikzlibrary{graphs}
\usepackage[a4paper]{geometry}
\definecolor{csfInk}{HTML}{243447}
\definecolor{csfBlue}{HTML}{2F6690}
\definecolor{csfTeal}{HTML}{1B998B}
\definecolor{csfGreen}{HTML}{4E9F3D}
\definecolor{csfRed}{HTML}{B23A48}
\definecolor{csfGray}{HTML}{E8ECEF}
\tikzset{
	csfblock/.style={
		draw=csfInk!55,
		rounded corners=2pt,
		fill=white,
		line width=0.45pt,
		align=center,
		inner xsep=7pt,
		inner ysep=6pt,
		font=\small
	},
	csfarrow/.style={-{Latex[length=2.4mm]}, line width=0.55pt, draw=csfInk!70},
	state/.style={
		circle,
		draw=csfInk!60,
		fill=white,
		minimum size=8mm,
		inner sep=0pt,
		font=\small\ttfamily
	},
	attractorstate/.style={
		state,
		draw=csfTeal!85!black,
		fill=csfTeal!12,
		line width=0.75pt
	},
	fixedstate/.style={
		state,
		draw=csfGreen!75!black,
		fill=csfGreen!14,
		line width=0.75pt
	},
	transientstate/.style={
		state,
		draw=csfInk!35,
		fill=csfGray!45
	}
}
\newtheorem{theorem}{Theorem}
\newtheorem{lemma}[theorem]{Lemma}
\theoremstyle{definition}
\newtheorem{definition}[theorem]{Definition}
\newtheorem{problem}[theorem]{Problem}

\usepackage[backend=bibtex,style=numeric,sorting=none]{biblatex}
\newcommand{\classP}{\mathrm{P}}
\newcommand{\classNP}{\mathrm{NP}}
\newcommand{\classPSPACE}{\mathrm{PSPACE}}
\newcommand{\MajClass}{\mathsf{Maj}}
\newcommand{\OrAndClass}{\mathsf{OrAnd}}
\newcommand{\AndOrClass}{\mathsf{AndOr}}
\newcommand{\AffClass}{\mathsf{Aff}}
\newcommand{\ConjClass}{\mathsf{Conj}}
\newcommand{\DisjClass}{\mathsf{Disj}}
\newcommand{\msuaffiliation}{AI Research Center, Faculty of Mechanics and Mathematics, Lomonosov Moscow State University}
\newcommand{\drobyshevemail}{drobyshev.sanya@yandex.ru}
\newcommand{\bokovemail}{grigoriy.bokov@math.msu.ru}

\title{A Complete Complexity Dichotomy for Cyclic Attractor Detection in Boolean Networks with Restricted Local Rules}
\author{Alexander Drobyshev\textsuperscript{1}
	\and Grigoriy Bokov\textsuperscript{1,*}}
\date{}

\begin{document}

	\maketitle
	\begin{center}
		\vspace{-1.5em}
		{\small
		\textsuperscript{1}\msuaffiliation\par
		\textsuperscript{*}Correspondence and requests for materials should be
		addressed to G.B. (\bokovemail).\par
		E-mail addresses: A.D., \drobyshevemail; G.B., \bokovemail.}
	\end{center}
	\vspace{0.75em}

\begin{abstract}
	Boolean networks (BNs) are finite models of interacting systems whose
	long-term behaviour is organised by attractors. After a transient phase, every
	synchronously updated BN reaches either a fixed state or a cyclic attractor.
	Therefore, deciding whether a prescribed cyclic behaviour exists is a basic
	computational task in finite-state network dynamics. Here we study this task
	for networks with restricted local rules: each node is updated by a Boolean
	rule from a fixed closed rule family, and the period is fixed in advance. We
	prove a complete dichotomy for every fixed period at least two. For each closed
	rule family, cyclic-attractor detection is either polynomial-time decidable or
	\(\classNP\)-complete. The intractable cases are precisely those that can
	implement majority-like self-dual behaviour or mixed monotone
	conjunction-disjunction mechanisms. In the tractable cases, affine, purely
	conjunctive, or purely disjunctive structure reduces the problem to linear
	algebra or graph reachability. These criteria identify which
	restricted local rule families preserve computational tractability in finite
	Boolean network models.
\end{abstract}

\noindent\textbf{Keywords:} Boolean networks, finite-state network dynamics, cyclic attractors, computational complexity, restricted local rules, tractability dichotomy.

	\section{Introduction}

	A Boolean network (BN) is a collection of binary-valued elements, or nodes,
	which influence one another according to local Boolean rules. The state of
	each node is represented by \(1\) or \(0\). At each discrete time step, the
	node values are updated either synchronously or asynchronously. Here we focus
	on the synchronous update scheme, in which all node values are
	updated simultaneously. A BN with \(n\) nodes is specified by \(n\) Boolean
	functions, one for each coordinate, and its global state is a binary vector in
	\(\{0,1\}^n\). Thus the dynamics form a finite map from \(\{0,1\}^n\) to
	itself.

	Since their introduction, BNs have been used as finite models of
	interacting systems. McCulloch and Pitts introduced a Boolean description of
	neural activity \cite{mcculloch}, and Kauffman later used Boolean networks as
	models of genetic control \cite{kauffman}. Related logical descriptions became
	standard in biochemical regulation through the work of Thomas and of Glass and
	Kauffman \cite{thomas,glassKauffman1973}. BNs also appear in neural
	computation \cite{hopfield}, coding and network-coding problems
	\cite{riis,gadouleau2011,gadouleau}, and social dynamics \cite{poindron}. In
	these applications, the same mathematical object is used to represent how
	local binary rules generate global finite-state behaviour.

	Regardless of the initial state, the trajectory of a finite deterministic BN
	must eventually enter a state or a closed cycle of states from which it cannot
	escape. These recurrent objects are the attractors of the BN. A cycle of
	length one is a fixed-point attractor, whereas a cycle of length at least two
	is a cyclic attractor. In neural models, attractors have been used to describe
	memory states and transitions between recurrent patterns
	\cite{goles,chossat,ladwani}. In regulatory models, attractors are used to
	represent stable cellular states and phenotypes
	\cite{kauffman,thomas}. Oscillatory regimes are associated with recurrent
	attractors in logical descriptions of biochemical control
	\cite{glassKauffman1973}, and negative feedback in circular gene networks
	provides a concrete mechanism for cyclic behaviour in biochemical models
	\cite{likhoshvai2020}. Our contribution is formal rather than
	application-specific: given a BN and a prescribed period, we study the
	decision problem of whether the network has a cyclic attractor of exactly that
	period.

	Figure~\ref{fig:attractor-landscape} illustrates these basic objects for a
	small three-variable BN under synchronous update. The local rules define a
	global map on eight configurations. Its state-transition graph contains two
	fixed-point attractors and one exact period-two cyclic attractor. Although the
	example is small, it illustrates the distinction used throughout the paper:
	recurrence is a property of the global finite map, while the model itself is
	specified locally through coordinate functions.

	\begin{figure}[H]
		\centering
		\resizebox{\textwidth}{!}{%
		\begin{tikzpicture}[
			x=1cm,y=1cm,
			landscapePanel/.style={
				draw=csfInk!18,
				rounded corners=4pt,
				line width=0.45pt
			},
			basinPanel/.style={
				draw=csfInk!13,
				rounded corners=3pt,
				line width=0.35pt
			}
			]
			\path[landscapePanel, fill=csfBlue!5] (-0.5,-2.25) rectangle (3.4,2.25);
			\path[landscapePanel, fill=csfGray!35] (3.8,-2.25) rectangle (12.7,2.25);

			\node[font=\bfseries\small, text=csfInk] at (1.45,1.95) {(a) Local rules};
			\node[font=\bfseries\small, text=csfInk] at (8.25,1.95) {(b) State-transition landscape};

			\node[state, fill=csfBlue!10] (x1) at (0.65,0.55) {$x_1$};
			\node[state, fill=csfBlue!10] (x2) at (2.25,0.55) {$x_2$};
			\node[state, fill=csfBlue!10] (x3) at (1.45,1.28) {$x_3$};
			\draw[csfarrow] (x1) to[bend left=18] (x2);
			\draw[csfarrow] (x2) to[bend left=18] (x1);
			\draw[csfarrow] (x1) -- (x3);
			\draw[csfarrow] (x2) -- (x3);
			\node[csfblock, fill=white, text width=2.9cm] at (1.45,-1.02)
				{\(f_1=x_2\)\\ \(f_2=x_1\)\\ \(f_3=x_1\vee x_2\)};

			\path[basinPanel, fill=csfGreen!9] (4.2,-1.65) rectangle (5.75,1.45);
			\path[basinPanel, fill=csfTeal!9] (6.1,-1.65) rectangle (10.4,1.45);
			\path[basinPanel, fill=csfGreen!9] (10.85,-1.65) rectangle (12.35,1.45);
			\node[fixedstate] (s000) at (4.98,0.72) {000};
			\node[transientstate] (s001) at (4.98,-0.62) {001};
			\node[attractorstate] (s011) at (7.15,0.72) {011};
			\node[attractorstate] (s101) at (9.35,0.72) {101};
			\node[transientstate] (s100) at (7.15,-0.72) {100};
			\node[transientstate] (s010) at (9.35,-0.72) {010};
			\node[fixedstate] (s111) at (11.6,0.72) {111};
			\node[transientstate] (s110) at (11.6,-0.62) {110};
			\draw[csfarrow] (s000) edge[loop above, looseness=5] (s000);
			\draw[csfarrow] (s001) -- (s000);
			\draw[csfarrow] (s100) -- (s011);
			\draw[csfarrow] (s010) -- (s101);
			\draw[csfarrow] (s011) to[out=25,in=155] (s101);
			\draw[csfarrow] (s101) to[out=-155,in=-25] (s011);
			\draw[csfarrow] (s110) -- (s111);
			\draw[csfarrow] (s111) edge[loop above, looseness=5] (s111);
			\node[font=\scriptsize, text=csfGreen!55!black] at (4.98,-1.32) {fixed basin};
			\node[font=\scriptsize, text=csfTeal!65!black] at (8.25,-1.32) {period-\(2\) basin};
			\node[font=\scriptsize, text=csfGreen!55!black] at (11.6,-1.32) {fixed basin};
		\end{tikzpicture}}
		\caption{From local Boolean rules to the attractor landscape for
		\(f(x_1,x_2,x_3)=\left(x_2,x_1,x_1\vee x_2\right)\). Panel (a) shows the
		three-variable Boolean network, and panel (b) shows its full finite
		state-transition graph with two fixed-point attractors and one cyclic
		attractor of exact period \(2\).}
		\label{fig:attractor-landscape}
	\end{figure}

	A central question for BNs and related finite network models is how local
	interactions shape global recurrence. Several lines of work explain
	attractors through feedback in the interaction graph. Positive
	feedback circuits are linked to multistationarity and memory-like behaviour
	\cite{thomasRichelle1988,thomasKaufman2001}. Negative circuits are associated
	with sustained oscillation and cyclic recurrence
	\cite{richard2010negative,richard2019cycles}. Later refinements relate signed
	cycles, monotone networks, and local negative circuits to the possible number
	and structure of fixed points and cyclic attractors
	\cite{aracena2017fixedCycles,tonello2019negativeCircuits}. For asynchronous
	Boolean dynamics, signed-cycle conditions can also separate attractors
	\cite{richardTonello2023Separation}. These results show that graph structure
	and signs constrain the possible recurrent behaviour.

	Random Boolean networks give another reason to separate local rule form from
	network wiring. In Kauffman-type ensembles, connectivity, bias, and sensitivity
	shape whether the dynamics are ordered, critical, or chaotic; the annealed
	approximation of Derrida and Pomeau gives a classical account of this
	dependence \cite{derridaPomeau1986}, and later work developed the
	random-network perspective further \cite{aldana2003boolean,drossel2008random}.
	Attractor counts and stability also vary with local-rule statistics
	\cite{samuelssonTroein2003,drossel2005number,klemmBornholdt2005}. More
	structured studies make the same point for particular logical formalisms:
	damage spreading, Lyapunov spectra, and regulatory logic classes relate update
	logic to dynamical regimes \cite{vispoel2024damage,sil2025logicClass}, while
	canalization and threshold-rule analyses show that modelling choices at the
	level of local rules can affect regulatory dynamics
	\cite{kadelkaMurrugarra2024Canalization,kadelkaHari2025ThresholdModels}. These
	results motivate local-rule restrictions, but they are probabilistic,
	empirical, or model-specific rather than a worst-case classification for
	explicitly specified finite Boolean maps.

	For finite Boolean networks themselves, attractor analysis has also been
	studied as an algorithmic problem. The full state-transition graph has
	\(2^n\) vertices, so direct enumeration is usually infeasible. Algorithms have
	been proposed for small attractors \cite{zhang2007smallAttractors}, SAT-based
	detection in synchronous Boolean networks \cite{dubrovaTeslenko2011}, and
	attractor computation in both synchronous and asynchronous Boolean networks
	\cite{zheng2013attractors}. Answer Set Programming gives another logical
	encoding for attractor computation in gene-regulatory models
	\cite{khaled2023aspBooleanNetworks}. Periodic-attractor search can also use
	a priori information about the network \cite{munzner2022PeriodicAttractors},
	and more general attractor detection and enumeration algorithms have been
	developed \cite{moriAkutsu2022Algorithms}. Algebraic state-space methods give
	another structural description of fixed points and limit cycles,
	including hidden order induced by dual Boolean networks
	\cite{zhangJiCheng2024HiddenOrder}.

	Existing worst-case results give two reference points. If the period is part
	of the input, limit-cycle existence is \(\classNP\)-complete \cite{bridoux}.
	If the period is one, attractor detection is fixed-point existence. Complete
	dichotomies are known for fixed points of Boolean dynamical systems with local
	functions restricted by closed Boolean-function classes~\cite{post1941} and
	interaction graphs restricted by minor-closed graph classes
	\cite{kosubFixedPoints}; counting versions have analogous dichotomies
	\cite{homanKosub2015}. Related work asks which signed interaction digraphs can
	realise prescribed fixed-point behaviour \cite{bridouxMaxFixedPoints}, and
	recent SAT-based methods show that singleton attractor computation remains an
	active algorithmic problem
	\cite{higuchi2025SingletonAttractors}.

	The fixed-period cyclic case lies between these settings. The period is no
	longer part of the input, so intractability cannot be attributed to an
	unbounded temporal specification. At the same time, the attractor is
	non-stationary, so fixed-point dichotomies do not determine the answer. We
	therefore ask whether the closed family of allowed local update rules alone
	determines the worst-case complexity of detecting a cyclic attractor of a
	prescribed exact period.

	Here we answer this question for all closed families of local Boolean rules.
	The update scheme is synchronous, the interaction graph is
	unrestricted, and the period is fixed in advance. Under these assumptions, we
	prove a complete dichotomy: for every fixed period at least two, each closed
	local rule family yields either a polynomial-time decision problem or an
	\(\classNP\)-complete one. The intractable cases are exactly those whose local
	rules can express majority-like self-dual behaviour or mixed monotone
	conjunction-disjunction mechanisms. The tractable cases are explained by
	affine, purely conjunctive, or purely disjunctive structure.

	The classification separates the computational effects of different
	restricted local rules. Some restricted rule families can still encode a
	global consistency test, whereas others preserve algebraic or one-sided order
	structure that makes exact-period detection polynomial-time decidable. The
	Results state this boundary and illustrate the corresponding dynamical
	mechanisms; the Methods give the representation model, algorithms,
	reductions, and full proofs.

	\section{Preliminaries}

	\subsection{Boolean networks and exact periods}

	This section fixes the notation and definitions used in the classification.
	We first describe Boolean networks as deterministic finite dynamical systems
	and define exact-period attractors. We then specify the input representation
	and the decision problem studied in the Results and Methods sections.

	Let \(\mathbb B=\{0,1\}\) denote the set of binary values. For a BN with
	\(n\) nodes, the state space is \(\mathbb B^n\). A state is written as
	\(x=(x_1,\ldots,x_n)\), where \(x_i\in\mathbb B\) is the value of node \(i\).

	For a positive integer \(n\), we write
	\[
	[n]=\{1,\ldots,n\}.
	\]

	Let \(f_1,\ldots,f_n\) be Boolean functions
	\[
	f_i:\mathbb B^n\to\mathbb B, \qquad i\in[n].
	\]

	\begin{definition}[Boolean network and interaction digraph]
		A \emph{Boolean network} on \(n\) components is a mapping
		\[
		f:\mathbb B^n\to\mathbb B^n,
		\qquad
		f=(f_1,\ldots,f_n),
		\]
		where each \(f_i\) is a Boolean function. The set \(\mathbb B^n\) is the
		configuration space.
		Following~\cite{bridoux}, the \emph{interaction digraph} of \(f\) is
		\(G_f=(V,E)\), where \(V=[n]\) and \(E\) is defined by
		\[
		(i,j)\in E
		\quad\Longleftrightarrow\quad
		f_j \text{ essentially depends on } x_i.
		\]
	\end{definition}

	Throughout this article the update scheme is synchronous: one step sends a state
	\(x\) to \(f(x)\). The map \(f\) induces a directed state-transition graph whose
	vertices are the configurations in \(\mathbb B^n\). The outgoing edge from a state
	\(x\) points to \(f(x)\). Since this graph is finite and deterministic,
	every trajectory eventually enters a directed cycle. These recurrent cycles
	are the attractors of the finite dynamical system.

	\begin{definition}[Attractors of exact period]
		A configuration \(x\in\mathbb B^n\) has \emph{exact period \(k\)} under \(f\) if
		\[
		f^k(x)=x
		\]
		and
		\[
		f^\ell(x)\neq x
		\qquad
		\text{for all } \ell\in\{1,\ldots,k-1\}.
		\]
		The corresponding orbit
		\[
		x,\ f(x),\ \ldots,\ f^{k-1}(x)
		\]
		is a \emph{cyclic attractor of exact period \(k\)} when \(k\ge 2\). For
		\(k=1\), it is a \emph{fixed-point attractor}. The set of configurations
		of exact period \(k\) is denoted by
		\[
		\Phi_k(f)=
		\left\{
		x\in\mathbb B^n
		\;\middle|\;
		f^k(x)=x
		\text{ and }
		\forall \ell\in\{1,\ldots,k-1\},\ f^\ell(x)\neq x
		\right\}.
		\]
		In the Boolean-network complexity literature, such cyclic attractors are
		often called limit cycles.
	\end{definition}

	Cell-cycle regulation provides a compact setting in which Boolean update rules
	and recurrent behaviour can be compared directly. In Boolean descriptions of
	yeast cell-cycle control, a regulatory component is encoded by a binary
	variable, with value \(1\) meaning active and value \(0\) meaning inactive;
	recurrent states describe regulatory patterns that persist or repeat over time
	\cite{li2004YeastCellCycle,davidichBornholdt2008FissionYeast}.
	Figure~\ref{fig:cell-cycle-bn} displays a four-node synchronous network in this
	spirit. The variables indicate a growth condition, cyclin activity, and two
	cyclin-degradation regulators. The rule
	\(f_2(x)=x_1\wedge\neg x_4\) expresses a simple regulatory dependence: cyclin
	can be active only when growth is present and Cdh1 is inactive. Because the
	network is small, its complete state-transition graph can also be displayed.
	The finite map has three attractors. Two have natural readings in the
	cell-cycle language. The fixed point \(0000\) represents absence of growth and
	inactive downstream components, and all configurations with \(x_1=0\) flow to
	this state. The exact period-six attractor follows the usual activation order
	under sustained growth: cyclin activity precedes Cdc20 and Cdh1 activation,
	after which cyclin is removed and the cycle returns to its initial state.

	\begin{figure}[H]
		\centering
		\resizebox{0.98\textwidth}{!}{%
		\begin{tikzpicture}[
			x=1cm,y=1cm,
			panel/.style={
				draw=csfInk!16,
				rounded corners=4pt,
				line width=0.45pt
			},
			subpanel/.style={
				draw=csfInk!14,
				fill=white,
				rounded corners=3pt,
				line width=0.38pt
			},
			networkNode/.style={
				circle,
				draw=csfInk!48,
				fill=white,
				line width=0.55pt,
				minimum size=11.3mm,
				align=center,
				inner sep=1.2pt,
				font=\scriptsize
			},
			rulePanel/.style={
				align=center,
				inner sep=0pt,
				font=\scriptsize
			},
			stateNode/.style={
				rectangle,
				rounded corners=1.5pt,
				draw=csfInk!38,
				fill=white,
				line width=0.36pt,
				minimum width=6.7mm,
				minimum height=4.35mm,
				inner xsep=0.6pt,
				inner ysep=0.35pt,
				font=\tiny\ttfamily
			},
			cycleState/.style={
				stateNode,
				draw=csfTeal!78!black,
				fill=csfTeal!10,
				line width=0.66pt
			},
			altCycleState/.style={
				stateNode,
				draw=csfBlue!78!black,
				fill=csfBlue!8,
				line width=0.58pt
			},
			fixedState/.style={
				stateNode,
				draw=csfInk!35,
				fill=csfGray!45,
				line width=0.52pt
			},
			graphArrow/.style={-{Latex[length=1.6mm]}, line width=0.48pt, draw=csfInk!54, shorten <=1.15pt, shorten >=1.15pt},
			depArrow/.style={-{Latex[length=2.8mm]}, line width=0.95pt, draw=csfInk!67, line cap=round, line join=round, shorten <=2.8pt, shorten >=2.8pt},
			negArrow/.style={-{Latex[length=2.8mm]}, line width=1.12pt, draw=csfRed!82!black, line cap=round, line join=round, shorten <=3pt, shorten >=3pt},
			cycleArrow/.style={-{Latex[length=1.85mm]}, line width=0.66pt, draw=csfTeal!78!black, shorten <=1.5pt, shorten >=1.5pt},
			altCycleArrow/.style={-{Latex[length=1.75mm]}, line width=0.58pt, draw=csfBlue!78!black, shorten <=1.5pt, shorten >=1.5pt}
			]

			\fill[panel, fill=csfBlue!5] (-0.35,-3.10) rectangle (16.55,2.62);
			\path[subpanel] (0.05,-2.76) rectangle (6.65,2.28);
			\path[subpanel] (6.80,-2.76) rectangle (16.25,2.28);

			\node[font=\bfseries\small, text=csfInk] at (3.35,1.87)
				{Boolean network};
			\node[font=\bfseries\small, text=csfInk] at (11.53,1.87)
				{State-transition graph};

			\node[networkNode, fill=csfBlue!8] (growth) at (1.32,0.43)
				{Growth\\[-0.15em]\(\scriptstyle x_1\)};
			\node[networkNode, fill=csfTeal!9] (clb) at (3.35,0.43)
				{Cyclin\\[-0.15em]\(\scriptstyle x_2\)};
			\node[networkNode, fill=csfTeal!9] (cdc20) at (5.48,0.43)
				{Cdc20\\[-0.15em]\(\scriptstyle x_3\)};
			\node[networkNode, fill=csfTeal!9] (cdh1) at (4.26,-0.87)
				{Cdh1\\[-0.15em]\(\scriptstyle x_4\)};
			\draw[depArrow, draw=csfInk!48] (growth) edge[loop above, looseness=8, min distance=4.8mm] (growth);
			\draw[depArrow] (growth) -- (clb);
			\draw[depArrow] (clb) -- (cdc20);
			\draw[depArrow] (cdc20.south) to[out=-92,in=18,looseness=1.35] (cdh1.east);
			\draw[negArrow] (cdh1.west) to[out=180,in=-88] (clb.south);

			\node[rulePanel] at (3.35,-2.12)
				{\begin{tabular}{@{}l@{\hspace{1.05em}}l@{}}
				\(f_1(x)=x_1\) &
				\(f_2(x)=x_1\wedge\neg x_4\)\\
				\(f_3(x)=x_2\) &
				\(f_4(x)=x_3\)
				\end{tabular}};

			\node[font=\scriptsize, text=csfInk!70] at (13.92,1.24)
				{state order: \((x_1,x_2,x_3,x_4)\)};

			\node[fixedState] (s0000) at (8.14,0.33) {0000};
			\node[stateNode] (s0001) at (9.24,0.33) {0001};
			\node[stateNode] (s0010) at (10.22,0.65) {0010};
			\node[stateNode] (s0011) at (10.22,-0.01) {0011};
			\node[stateNode] (s0100) at (11.26,0.91) {0100};
			\node[stateNode] (s0101) at (11.26,0.43) {0101};
			\node[stateNode] (s0110) at (11.26,-0.19) {0110};
			\node[stateNode] (s0111) at (11.26,-0.67) {0111};

			\draw[graphArrow, draw=csfInk!38] (s0000) edge[loop left, looseness=8, min distance=3.8mm] (s0000);
			\draw[graphArrow] (s0001) -- (s0000);
			\draw[graphArrow] (s0010) -- (s0001);
			\draw[graphArrow] (s0011) -- (s0001);
			\draw[graphArrow] (s0100) -- (s0010);
			\draw[graphArrow] (s0101) -- (s0010);
			\draw[graphArrow] (s0110) -- (s0011);
			\draw[graphArrow] (s0111) -- (s0011);

			\node[cycleState] (s1000) at (13.90,0.52) {1000};
			\node[cycleState] (s1100) at (15.08,-0.01) {1100};
			\node[cycleState] (s1110) at (15.08,-0.95) {1110};
			\node[cycleState] (s1111) at (13.90,-1.48) {1111};
			\node[cycleState] (s1011) at (12.72,-0.95) {1011};
			\node[cycleState] (s1001) at (12.72,-0.01) {1001};
			\draw[cycleArrow] (s1000) -- (s1100);
			\draw[cycleArrow] (s1100) -- (s1110);
			\draw[cycleArrow] (s1110) -- (s1111);
			\draw[cycleArrow] (s1111) -- (s1011);
			\draw[cycleArrow] (s1011) -- (s1001);
			\draw[cycleArrow] (s1001) -- (s1000);
			\node[font=\scriptsize, text=csfTeal!72!black] at (13.90,-1.95)
				{exact period \(6\)};

			\node[font=\scriptsize, text=csfInk!64, align=center] at (8.14,-0.08)
				{fixed point};

			\node[altCycleState] (s1010) at (8.62,-1.48) {1010};
			\node[altCycleState] (s1101) at (9.78,-1.48) {1101};
			\draw[altCycleArrow] (s1010.north east) to[out=25,in=155] (s1101.north west);
			\draw[altCycleArrow] (s1101.south west) to[out=-155,in=-25] (s1010.south east);
			\node[font=\scriptsize, text=csfBlue!72!black] at (9.20,-1.95)
				{exact period \(2\)};
		\end{tikzpicture}}
		\caption{A cell-cycle example. The left
		panel shows the Boolean network and its coordinate functions; the right
		panel shows the full synchronous state-transition graph, with attractors
		highlighted.}
		\label{fig:cell-cycle-bn}
	\end{figure}

	\subsection{Decision problem and input encoding}

	The restrictions considered in this work are imposed on the coordinate
	functions \(f_i\), that is, on the local update rules of the network. They are
	expressed through closed classes of Boolean functions.
	\begin{definition}[Closure of a function class]
		Let \(\mathcal F\) be a set of Boolean functions. Its \emph{closure} under
		superposition and introduction of fictitious variables is denoted by
		\([\mathcal F]\).
	\end{definition}

	Post introduced the lattice of closed Boolean-function classes
	\cite{post1941}; complete inclusion diagrams and finite generating sets are
	given, for example, by Lau~\cite{lau2006}. In the decision problem below, a
	local rule family is fixed before the input is given. The input network is
	then represented by circuits over that fixed finite family. The detailed
	circuit convention is stated in the Methods, where it is needed to verify the
	size of reductions and the running time of algorithms. The classification
	depends on the closed class of admissible local rules, not on the particular
	finite generating set chosen to represent it.

	With this notation, the decision problem studied in this article is the
	following. We use the label \(k\)-PLCE for prescribed-period limit-cycle
	existence. The local rule family is fixed by the finite family
	\(\mathcal F\), while the network itself is given by circuits over
	\(\mathcal F\).
	In every complexity statement, the period \(k\) is also fixed in advance and is
	not part of the input. Thus the input size measures the explicit Boolean
	network representation, not the numerical encoding of the requested period.

	\begin{problem}[\(k\)-PLCE\((\mathcal F)\)]
		Fix an integer \(k\ge 1\) and a finite set \(\mathcal F\) of Boolean
		functions.
		The input is a Boolean network \(f=(f_1,\ldots,f_n)\) under the synchronous
		update scheme, where each coordinate \(f_i\in[\mathcal F]\) is represented
		by a circuit over the fixed family \(\mathcal F\). The question is whether
		\[
		\Phi_k(f)\neq\emptyset.
		\]
	\end{problem}

	When \(\mathcal C\) is a closed class with a fixed finite generating set
	\(\mathcal B_{\mathcal C}\), we write \(k\)-PLCE\((\mathcal C)\) for
	\(k\)-PLCE\((\mathcal B_{\mathcal C})\). This notation is independent of the
	particular finite generating set chosen for \(\mathcal C\), because fixed
	finite bases for the same closed class give polynomially equivalent circuit
	encodings. Dynamically, \(k\)-PLCE asks
	whether the attractor landscape contains a fixed point when \(k=1\), or a
	non-stationary recurrent pattern of prescribed exact period when \(k\ge 2\).

	\section{Results}

	\subsection{Complete local-rule classification}

	The central result is a complete classification of exact-period cyclic
	attractor detection for Boolean networks with restricted local rules.
	The interaction graph is unrestricted, and the update scheme is synchronous.
	Thus two networks may have comparable wiring, but the computational
	difficulty of analysing their attractor landscapes can change when the local
	rules are affine, monotone, self-dual, conjunctive, or disjunctive. The
	classification below gives the polynomial versus \(\classNP\)-complete
	boundary and then illustrates the dynamical mechanisms that appear there.

	Table~\ref{tab:boundary-classes} lists the six rule families that determine
	the classification. For the majority boundary, we use the odd-arity majority
	function
	\[
	\operatorname{maj}_{2q+1}(x_1,\ldots,x_{2q+1})=1
	\quad\Longleftrightarrow\quad
	x_1+\cdots+x_{2q+1}\ge q+1,
	\]
	where the sum is taken in \(\mathbb Z\). Thus
	\(\operatorname{maj}_3(x,y,z)=(x\wedge y)\vee(y\wedge z)\vee
	(x\wedge z)\). When the arity is clear, we write \(\operatorname{maj}\);
	repeated arguments are allowed.

	\begin{table}[H]
		\centering
		\begingroup
		\small
		\setlength{\tabcolsep}{3pt}
		\begin{tabular}{>{\raggedright\arraybackslash}p{0.24\textwidth}
			>{\centering\arraybackslash}p{0.13\textwidth}
			>{\raggedright\arraybackslash}p{0.21\textwidth}
			>{\raggedright\arraybackslash}p{0.32\textwidth}}
			\hline
			Rule family & Class label & Generated by & Local rule form \\
			\hline
			Majority self-dual & \(\MajClass\) & \(\operatorname{maj}_3\) & Monotone self-dual Boolean functions. \\
			Mixed monotone, OR-over-AND & \(\OrAndClass\) & \(x\vee(y\wedge z)\) & Monotone functions preserving \(0,1\) and bounded below by one argument. \\
			Mixed monotone, AND-over-OR & \(\AndOrClass\) & \(x\wedge(y\vee z)\) & Monotone functions preserving \(0,1\) and bounded above by one argument. \\
			Affine & \(\AffClass\) & \(1,\ x\oplus y\) & Functions \(c\oplus x_{i_1}\oplus\cdots\oplus x_{i_r}\). \\
			Conjunctive with constants & \(\ConjClass\) & \(x\wedge y,\ 0,\ 1\) & Constants and finite conjunctions of variables. \\
			Disjunctive with constants & \(\DisjClass\) & \(x\vee y,\ 0,\ 1\) & Constants and finite disjunctions of variables. \\
			\hline
		\end{tabular}
		\endgroup
		\caption{Boundary rule families for the local-rule dichotomy. The columns
		give the family name, the label used in Theorem~\ref{thm:post-dichotomy},
		a finite generating set, and the corresponding form of the local rules.}
		\label{tab:boundary-classes}
	\end{table}

	The first three rows of Table~\ref{tab:boundary-classes} are the hard
	mechanisms: any closed class containing one of them can express enough local
	logic to encode a global Boolean consistency test. The last three rows are the
	tractable mechanisms: they preserve affine or one-sided order structure under
	the iterates used for fixed-period detection. The theorem states that these
	two alternatives exhaust all closed Boolean-function classes for every fixed
	non-stationary period.

	\begin{theorem}[Local-rule dichotomy]\label{thm:post-dichotomy}
		Fix \(k\ge 2\). For every closed Boolean-function class
		\(\mathcal C\), \(k\)-PLCE\((\mathcal C)\) is either
		polynomial-time solvable or \(\classNP\)-complete. More precisely,
		\(k\)-PLCE\((\mathcal C)\) is \(\classNP\)-complete if \(\mathcal C\) contains at
		least one of \(\MajClass\), \(\OrAndClass\), or \(\AndOrClass\). In all
		remaining cases, \(\mathcal C\) is contained in \(\AffClass\),
		\(\ConjClass\), or \(\DisjClass\), and \(k\)-PLCE\((\mathcal C)\) is
		decidable in polynomial time.
	\end{theorem}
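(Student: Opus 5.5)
The proof splits into three parts: a lattice-theoretic case split, membership in \(\classNP\) with polynomial algorithms for the three tractable families, and \(\classNP\)-hardness for the three boundary families.

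\emph{Case split and membership.} Using Post's lattice~\cite{post1941,lau2006}, I first check that every closed class not containing \(\MajClass\), \(\OrAndClass\) or \(\AndOrClass\) lies inside \(\AffClass\), \(\ConjClass\) or \(\DisjClass\). For this I trace the minimal classes above each of the three tractable maxima and confirm that each such cover contains one of the three boundary generators \(\operatorname{maj}_3\), \(x\vee(y\wedge z)\), \(x\wedge(y\vee z)\). Membership of \(k\)-PLCE\((\mathcal C)\) in \(\classNP\) is immediate for fixed \(k\). The certificate is \(x\). The verifier evaluates \(f,f^2,\dots,f^k\) on \(x\) by circuit evaluation, which takes \(k\) polynomial steps, and checks \(f^k(x)=x\) and \(f^\ell(x)\ne x\) for \(\ell<k\).

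\emph{Tractable cases.} Polynomial solvability is monotone under inclusion, so it suffices to treat the three maxima.

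For \(\AffClass\), I rewrite the circuits as \(f(x)=Ax\oplus b\) over \(\mathrm{GF}(2)\). This is done by propagating affine forms through the gates, which stays polynomial because every gate output is an affine form of size at most \(n\). Then \(f^\ell(x)=A^\ell x\oplus b_\ell\). The set \(S_d=\{x: f^d(x)=x\}\) is an affine subspace computable by Gaussian elimination. We have \(\Phi_k\neq\emptyset\) iff \(S_k\not\subseteq\bigcup_{d\mid k,\,d<k}S_d\). A configuration of exact period \(\ell<k\) satisfying \(f^k(x)=x\) must have \(\ell\mid k\), so these \(S_d\) are exactly the excluded sets. There are constantly many of them, but deciding whether an affine space is covered by a constant number of affine subspaces still needs care. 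I would restrict each \(S_d\cap S_k\) to a subspace of \(S_k\). A proper affine subspace has at most half the points, so a constant number of proper subspaces cannot cover \(S_k\) once \(\dim S_k\) exceeds \(\log_2\) of that number; in small dimension we simply enumerate. Hence coverage holds iff \(S_k\) is small and enumeration shows it covered, or some \(S_d\supseteq S_k\) — a linear-algebra check.

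For \(\ConjClass\), each \(f_i\) is a constant or \(\bigwedge_{j\in N(i)}x_j\). Then \(f^\ell_i\) is a constant or the conjunction over vertices reachable by walks of length exactly \(\ell\). I would decide existence of a period-\(k\) point via the structure of the dependency digraph restricted to nonconstant nodes, using strongly connected components, loop lengths and cyclicity indices (gcds of cycle lengths). A periodic point must be constant on components after sufficiently many steps. Exact period \(k\) is achievable iff some node reaches (or lies in) a nontrivial strongly connected component whose period \(p\) is such that the set of periods of such components has lcm-combinations giving exactly \(k\). I would formalize that the achievable periods are exactly the lcms of subsets of component cyclicities satisfying a consistency condition checkable by reachability. \(\DisjClass\) is dual under \(x\mapsto\neg x\), which preserves periods.

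\emph{Hardness.} For each boundary class I reduce from a known \(\classNP\)-complete problem. Ideally this is the fixed-point dichotomy of~\cite{kosubFixedPoints}, which gives hardness of \(1\)-PLCE for these classes, or directly SAT, lifting fixed points to exact period \(k\). The idea is to take a hard network \(g\) on \(y\) and add a monotone "clock": a directed \(k\)-cycle of copy gates \(c_1\to c_2\to\cdots\to c_k\to c_1\), which is in every class since projections are available. The clock must be coupled so that periodic points of the product have exact period \(k\) iff \(g\) has a fixed point. Mere product \(g\times\)clock fails, because other cycles of \(g\) of period dividing \(k\) also appear. So I would instead design, per class, a reduction from a monotone satisfiability variant directly: for \(\OrAndClass\) and \(\AndOrClass\), monotone \(1\)-in-\(3\)/NAE-type encodings with clause gates of the form \(x\vee(y\wedge z)\); for \(\MajClass\), the self-dual majority encoding of NAE-3SAT. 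In each case, variables are replicated \(k\) times along a rotating cycle, so that a satisfying assignment yields a travelling pattern of exact period \(k\), and clause checks force any period-\(k\) orbit to decode to a satisfying assignment. The main obstacle is this step, particularly for \(\MajClass\), where no constants are available and self-duality makes every orbit come with its complement. My first attempt would use pairs \((x,\neg x)\) simulated by complement orbits, plus a majority-based "alarm" gadget that collapses any unsatisfying configuration into a fixed point.
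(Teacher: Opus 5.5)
Your overall architecture (lattice split, inclusion monotonicity, six boundary cases) matches the paper's. However, the boundary cases that carry the content are not established.

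The decisive gap is hardness. The route through Kosub's dichotomy cannot work. Every function in \(\MajClass\), \(\OrAndClass\), \(\AndOrClass\) is monotone and preserves \(0\) and \(1\), so every such network fixes the all-zero configuration and \(1\)-PLCE is trivial there. There is no fixed-point hardness to lift. What you offer instead (NAE/1-in-3 encodings, rotating replicated variables, an ``alarm'' gadget) leaves the central construction open, as you acknowledge. The ideas you are missing, all used in Lemma~\ref{lem:d2}, are these. Store the assignment in self-looped vertices \(v_j,\bar v_j\) (projections onto themselves, so no constants are needed), and test consistency by \(w_j=\operatorname{maj}(v_j,\bar v_j,w_j)\). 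Test clause \(i\) by a self-looped vertex updated by \(\theta(u_i,L_{i1},L_{i2},L_{i3})\), which computes \(\vee\) when \(u_i=1\) and \(\wedge\) when \(u_i=0\). Note that, with literals fixed, each checking vertex evolves by a monotone self-map of \(\mathbb B\) and is therefore constant on any periodic orbit. Finally, close a \(k\)-shift register through the chain \(\chi_h=\operatorname{maj}(\chi_{h-1},u_{i_h},w_{j_h})\) over all pairs \((i,j)\); it transmits \(\delta_k\) exactly when every \(u_i\) differs from every \(w_j\). Self-duality is then harmless, because both polarities \(u\equiv1,\,w\equiv0\) and \(u\equiv0,\,w\equiv1\) decode to satisfying assignments. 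You also do not need separate encodings for the mixed monotone classes. The map \(g_0=y\), \(g_i=y\wedge f_i\) turns a \(\MajClass\)-network into an \(\AndOrClass\)-network with the same answer (\(y=0\) kills every non-stationary cycle, \(y=1\) reproduces \(f\)), and coordinatewise duality then gives \(\OrAndClass\).

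On the tractable side, the conjunctive case is only a programme. The ``consistency condition checkable by reachability'' is never formulated. The heuristic that periodic points are constant on components is false as stated: the swap \(x_1\leftrightarrow x_2\) has the orbit \(01\mapsto10\). Interactions between components are exactly where lcm reasoning fails. If a \(2\)-cycle \(a_1\leftrightarrow a_2\) feeds a \(3\)-cycle via \(b_1'=b_3\wedge a_1\), the exact periods are \(1,2,3\) but not \(6\), since zeros injected at one parity flood the \(3\)-cycle. The paper avoids any period calculus. For fixed \(k\) it computes \(f^t\) for \(t\le k\) and characterises \(f^k(x)=x\) by zero sets in the dependency graph of \(f^k\). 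It then guesses one disagreement witness for each \(\ell<k\) (\(O(n^{2(k-1)})\) tuples) and checks each guess by one reachability/SCC computation.

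In the affine case your covering lemma is false. Once there are two or more of them, proper affine subspaces of size at most \(|S_k|/2\) can cover \(S_k\) in any dimension: two cosets of one hyperplane do, and even three linear hyperplanes cover \(\mathbb F_2^n\). The test you end up with happens to be correct, but only because of extra structure you do not prove: some point of \(S_k\) has period equal to the order of \(f\) on \(S_k\). The easy repair is counting. We have \(S_d\cap S_{d'}=S_{\gcd(d,d')}\), and each \(|S_d|\) is \(0\) or a power of two read off from Gaussian elimination. Inclusion--exclusion, i.e.\ the paper's M\"obius formula \(N(k)=\sum_{d\mid k}\mu(k/d)\,|S_d|\), therefore decides \(\Phi_k(f)\neq\emptyset\).

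A minor point on the lattice separation: you must examine the minimal classes outside the three down-sets, not the covers of \(\AffClass\), \(\ConjClass\), \(\DisjClass\). For instance, \(\MajClass\) is an atom of Post's lattice and is not such a cover. The paper simply cites Lau for this step.
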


	The six named families are boundary cases, not the only classes covered by the
	theorem. Once the closed class generated by the admissible local rules is
	fixed, its position relative to Table~\ref{tab:boundary-classes} determines
	the complexity. Classes containing a hard mechanism are \(\classNP\)-complete;
	all other classes are covered by one of the affine, conjunctive, or
	disjunctive polynomial-time algorithms.

	The stationary case \(k=1\) gives a useful reference point. Then
	\(k\)-PLCE is fixed-point existence, and for unrestricted interaction graphs
	Kosub's dichotomy gives \(\classNP\)-completeness precisely for closed classes
	containing the self-dual class~\cite{kosubFixedPoints}. Theorem
	\ref{thm:post-dichotomy} shows that prescribing a non-stationary period changes
	the local-rule boundary: for every fixed \(k\ge2\), the hard classes are
	\(\MajClass\), \(\OrAndClass\), and \(\AndOrClass\).

	\subsection{Dynamical mechanisms at the boundary}

	The boundary is not determined by whether non-stationary cycles can occur:
	they occur on both sides of the dichotomy. The difference is how such cycles
	are constrained. In the tractable classes, the equations defining a prescribed
	period retain a rigid algebraic or order-theoretic form. In the hard classes,
	a local rule can make the propagation of a periodic signal conditional on a
	Boolean consistency choice. The following examples show this distinction at
	the level of trajectories.

	\paragraph{Example 1: affine recurrence is algebraic.}
	Consider the affine network on two variables
	\[
	f(x_1,x_2)=\left(x_2,\ x_1\oplus x_2\right).
	\]
	For this network,
	\[
	f^2(x_1,x_2)=\left(x_1\oplus x_2,\ x_1\right),
	\qquad
	f^3(x_1,x_2)=\left(x_1,\ x_2\right).
	\]
	The configuration \(00\) is fixed, and the three non-zero configurations form
	the exact period-three attractor
	\[
	01\longmapsto 11\longmapsto 10\longmapsto 01.
	\]
	This example shows why the polynomial side of the dichotomy cannot be
	interpreted as a dynamical restriction to fixed points. Affine networks may
	have genuine non-stationary recurrence, but their recurrence equations remain
	linear: for an arbitrary affine network \(f(x)=Ax\oplus b\), the condition
	\(f^t(x)=x\) is a linear system over \(\mathbb F_2\) for every fixed \(t\).
	Exact period is obtained by removing solutions with smaller period, rather
	than by exploring the exponentially large state-transition graph.

	\paragraph{Example 2: one-sided recurrence is order-theoretic.}
	Now consider the conjunctive network
	\[
	f(x_1,x_2,x_3)=\left(x_2,\ x_1,\ x_1\wedge x_2\right).
	\]
	It has the exact period-two orbit
	\[
	010\longmapsto 100\longmapsto 010,
	\]
	and the fixed points \(000\) and \(111\). In terms of zero sets
	\(Z(x)=\left\{i:x_i=0\right\}\), the two-cycle is
	\[
	\{1,3\}\longmapsto \{2,3\}\longmapsto \{1,3\}.
	\]
	The first two coordinates exchange their values, while the third
	coordinate remains zero because a conjunction is zero as soon as one of its
	inputs is zero. Thus the orbit is non-stationary, but it follows a closure
	rule for zero sets rather than an arbitrary interaction among states. In a
	general conjunctive network, the condition \(f^t(x)=x\) can be
	read through how zeros propagate along the dependency graph over \(t\) steps.
	This is the order-theoretic reason why the conjunctive boundary leads to
	graph-reachability and strongly connected component conditions.

	The disjunctive boundary is the same phenomenon under coordinatewise
	complementation. The dual disjunctive network
	\[
	g(x_1,x_2,x_3)=\left(x_2,\ x_1,\ x_1\vee x_2\right)
	\]
	has the complementary exact period-two orbit
	\[
	101\longmapsto 011\longmapsto 101.
	\]
	In the dual view, sets of ones replace zero sets and satisfy the corresponding
	closure conditions. The two examples therefore represent the same tractable
	mechanism: cyclic behaviour is allowed, but the admissible cyclic patterns are
	controlled by monotone propagation in the dependency graph.

	\paragraph{Example 3: signal transmission can be conditional.}
	The hard side of the theorem relies on a different local mechanism. The
	ternary majority rule \(\operatorname{maj}\) in \(\MajClass\) can make signal
	transmission depend on two auxiliary inputs. With \(s\) viewed as a signal and
	\(u,v\) as auxiliary inputs,
	\[
	\operatorname{maj}(s,u,v)=
	\begin{cases}
	0, & u=v=0,\\
	s, & u\neq v,\\
	1, & u=v=1.
	\end{cases}
	\]
	The same local rule therefore has two distinct modes. If the auxiliary inputs
	are equal, the output is forced to a constant; if they are different, the
	signal passes through unchanged.

	This distinction can already be seen in a two-coordinate cycle module with
	auxiliary coordinates that remain fixed along the trajectory. For fixed
	auxiliary values \(u\) and \(v\), set
	\[
	h_{u,v}(c_1,c_2)=
	\left(\operatorname{maj}(c_2,u,v),c_1\right).
	\]
	When \(u\neq v\), the map on \((c_1,c_2)\) is
	\((c_1,c_2)\mapsto(c_2,c_1)\), so \(10\) and \(01\) form an exact
	period-two attractor. When \(u=v=0\) or \(u=v=1\), one coordinate is forced,
	and the non-stationary cycle disappears. Thus the existence of a periodic
	orbit can depend on whether an auxiliary layer can realise the balanced
	condition.

	The hardness reductions use this local mechanism at a larger scale. The
	periodic layer remains simple: it is a shift of a prescribed pattern. The
	nontrivial part is the auxiliary layer, whose local tests encode the
	consistency of a Boolean assignment. A cyclic attractor exists only when those
	tests allow the cycle signal to pass through. Thus the desired attractor
	remains an ordinary finite recurrent orbit, but its existence is tied to a
	global satisfiability constraint.

	\section{Discussion}

	We classify which local Boolean rules make prescribed-period cyclic-attractor
	detection tractable in the worst case. The update scheme is synchronous, the
	period is fixed in advance, and the interaction graph is not restricted. Under
	these assumptions, the boundary is determined by the closed class of
	admissible coordinate functions. Classes containing \(\MajClass\),
	\(\OrAndClass\), or \(\AndOrClass\) give \(\classNP\)-complete instances; all
	remaining closed classes are contained in \(\AffClass\), \(\ConjClass\), or
	\(\DisjClass\), where the problem is polynomial-time decidable. Thus the
	result separates the computational effects of different restricted local
	rules, rather than treating local-rule restrictions as a single form of
	regularity.

	The hard cases show that local regularity is not sufficient for efficient
	cyclic-attractor detection. The reductions use a fixed period and a simple
	cycle module; the source of hardness is the surrounding logical layer that
	decides whether the periodic signal can be transmitted. Majority-like
	self-dual rules and the two mixed monotone classes can express this
	conditional transmission while remaining highly restricted as Boolean
	functions. Consequently, fixing the period does not remove the possibility of
	encoding a global consistency test inside the network.

	The tractable cases have a different explanation. Affine rules remain affine
	under iteration, so the relevant periodicity equations reduce to linear
	systems over \(\mathbb F_2\). Purely conjunctive rules with constants can be
	analysed through the propagation of zero sets, and purely disjunctive rules
	through the dual propagation of sets of ones. These arguments do not rely on a
	small state-transition graph. They rely on algebraic or order-theoretic
	structure that is preserved by the network dynamics. This is why monotonicity
	alone does not mark the tractability boundary: some monotone classes are hard,
	while the one-sided conjunctive and disjunctive classes are tractable.

	This classification complements algorithms for concrete Boolean networks.
	Small-attractor search, SAT encodings, and symbolic methods avoid
	explicit traversal of the \(2^n\)-state transition graph
	\cite{zhang2007smallAttractors,dubrovaTeslenko2011,zheng2013attractors}.
	Other methods exploit additional structure: Answer Set Programming provides
	logical encodings for regulatory models \cite{khaled2023aspBooleanNetworks},
	trap spaces and related counting methods use invariant regions of Boolean
	dynamics \cite{klarner2015trapSpaces,kabir2025ScalableCounting}, and symbolic
	bifurcation analysis studies parameter-dependent changes in attractors
	\cite{benes2022AttractorBifurcations}. Network reductions and linear cuts
	provide reduction-based approaches to large models
	\cite{tonelloPauleve2023Reduction,naldi2023LinearCuts}. Attractor-landscape
	and stable-motif workflows give another practical route to model analysis
	\cite{trinh2025Biobalm,rozum2022Pystablemotifs}, while BooN provides a
	dedicated software environment for Boolean networks \cite{delaplace2025BooN}.
	Such methods may be effective because they exploit the particular model,
	update semantics, or application domain. The dichotomy proved here addresses a
	different worst-case question: whether the local rule class alone guarantees a
	polynomial-time decision procedure for every instance. For the tractable
	classes, the Methods provide such procedures; for the hard classes, such a
	guarantee would imply \(\classP=\classNP\).

	The difference from fixed-point detection has a dynamical explanation. A fixed
	point only has to satisfy a stationary equation, whereas a non-stationary cycle
	must also maintain a temporal pattern. The hard reductions use this additional
	temporal component by coupling a period-\(k\) signal to a satisfiability test.
	The tractable algorithms, in contrast, work because affine and one-sided
	closure structure remain stable under the iterates needed to test that
	temporal pattern.

	For Boolean-network modelling, the theorem is best interpreted as a statement
	about algebraic expressiveness, not as a claim about typical behaviour in
	applications. Boolean networks are widely used as simplified models of gene
	regulation \cite{kauffman,thomas,glassKauffman1973} and neural computation
	\cite{hopfield}, and modelling formalisms may impose additional restrictions
	on local rules. Canalization and threshold representations, for example,
	capture specific regulatory mechanisms and must be evaluated against the
	dynamics they are intended to represent
	\cite{kadelkaMurrugarra2024Canalization,kadelkaHari2025ThresholdModels}. Closed
	Boolean-function classes impose a different kind of restriction, based on
	closure properties of Boolean functions. The theorem should therefore not be
	read as a claim that application-derived networks are typically hard. It
	identifies which closed rule families are expressive enough to contain hard
	worst-case instances, and which affine or one-sided families exclude the
	encodings used in the hardness proofs. A related distinction arises in
	Boolean-network control. Controllability studies ask how interventions can
	steer networks toward selected attractors
	\cite{borrielloDaniels2021Controllability}, and influence maximization poses
	a related intervention problem on Boolean dynamics
	\cite{parmerRochaRadicchi2022Influence}. Work on difficult control and global
	stabilization studies how recurrent behaviour can be enforced or stabilized
	\cite{danielsBorriello2025DifficultControl,rafimanzelat2025GlobalStabilization}.
	Here, by contrast, we classify the complexity of detecting recurrent behaviour
	with restricted local rules.

	The classification applies to the recurrence problem studied here: an
	explicitly given finite Boolean map, synchronous update, and a fixed period.
	Changing the state space, the update semantics, or whether the period is fixed
	changes the mathematical problem. For
	switching systems, one may ask which attractors are retained after Boolean
	idealisation \cite{norrell2007attractors}. For continuous recurrent neural
	networks, recurrence is studied through fixed or slow points of trained
	systems \cite{sussilloBarak2013OpeningBlackBox}, symbolic encodings of
	piecewise-linear reconstructions \cite{brenner2024AlmostLinearRNN}, or fixed
	points and \(k\)-cycles in ReLU phase-space partitions
	\cite{eisenmann2023BifurcationsRNN}. Even within Boolean dynamics, changing
	the update semantics changes the object to be detected. Under asynchronous
	update, attractors are recurrent components rather than single periodic
	orbits, and signed-cycle structure or network reduction can play different
	roles \cite{richardTonello2023Separation,tonelloPauleve2023Reduction}.
	Sequential and block-sequential schedules lead to different cycle-complexity
	questions \cite{aracenaa}, while block-parallel update modes can make related
	dynamical decision problems complete for \(\classPSPACE\)
	\cite{perrotSeneTapin2024BlockParallel}. Natural extensions are therefore to
	treat the period as part of the input, combine local rule restrictions with
	graph restrictions, and compare the worst-case boundary with attractor
	statistics in random or curated Boolean-network models.

	\section{Methods}

	\subsection{Auxiliary notation and reductions}

	Throughout the Methods, \(k\ge 2\) is fixed. If \(z\in\mathbb B^n\) is a
	configuration and \(v\in[n]\), then \(z_v\) denotes the value of coordinate
	\(v\) in \(z\). For a trajectory \(z^{(0)},z^{(1)},\ldots\) of the global
	update map, \(z_v^{(t)}\) denotes the value of coordinate \(v\) at time \(t\).
	The same notation is used for original network coordinates and for auxiliary
	coordinates introduced in reductions. In running-time estimates, \(s\) denotes
	the total size of the input coordinate circuits, counting gates and wires.

	The hardness proofs use many-one reductions computable in time polynomial in
	the size of the source instance. For the boundary based on satisfiability, the
	source problem is the following standard decision problem.

	\begin{problem}[\(3\)-SAT]
		The input is a Boolean formula \(\psi\) in \(3\)-CNF on variables
		\(\{y_1,\ldots,y_n\}\). The question is whether there is an assignment
		\(a:\{y_1,\ldots,y_n\}\to\mathbb B\) such that \(\psi(a)=1\).
	\end{problem}

	An admissible instance over a closed class \(\mathcal C\) uses a fixed finite
	generating set \(\mathcal B_{\mathcal C}\), chosen before the input is given.
	Each coordinate function is represented by an acyclic circuit over
	\(\mathcal B_{\mathcal C}\) with inputs among \(x_1,\ldots,x_n\). Repeated
	inputs are allowed. Constants may be used only when they belong to
	\(\mathcal C\); if such a constant is not included as a primitive gate, a
	fixed \(\mathcal B_{\mathcal C}\)-circuit for it is substituted. Since the
	generating set is fixed and finite, gate fan-in is bounded by a constant. The
	instance size is \(n\) plus the total size of all coordinate circuits,
	counting gates and wires.

	The particular finite generating set does not affect the classification. If
	two fixed finite generating sets define the same closed class, each gate of
	one representation can be replaced by a fixed circuit over the other, giving
	polynomial-time equivalent encodings~\cite{lau2006}. When a proof writes a
	local rule using a fixed auxiliary function in the target class, the rule is
	still interpreted within this circuit model: fix one circuit for the
	auxiliary function over the target generating set and insert that circuit at
	each occurrence. These substitutions increase the output size by only a
	constant factor per occurrence.

	This representation convention is used throughout the proofs. Each reduction
	specifies the target local rule class, the size of the constructed network,
	and the equivalence between source instances and exact-period attractors. Each
	polynomial algorithm specifies the representation it uses and why, for fixed
	\(k\), the number of checks is polynomial.

	\subsection{Proof strategy}

	The dichotomy follows from boundary results and a lattice-theoretic lifting
	step. Table~\ref{tab:boundary-summary} summarises the boundary analysis: the
	hard lemmas give reductions from \(3\)-SAT or from an already established hard
	boundary class, whereas the tractable lemmas give algorithms based on affine
	linear algebra or one-sided graph reachability. Lemma~\ref{lem:monotonicity}
	then transfers results along inclusions of local rule classes, and
	Lemma~\ref{lem:post-separation} states that every closed class avoiding the
	three hard boundary classes is contained in one of the three tractable
	boundary classes. These two structural statements turn the six boundary
	results into the full dichotomy for all closed Boolean-function classes.

	\begin{table}[H]
		\centering
		\begingroup
		\small
		\setlength{\tabcolsep}{3pt}
		\begin{tabular}{p{0.16\textwidth}p{0.20\textwidth}p{0.42\textwidth}p{0.12\textwidth}}
			\hline
			Rule family & Complexity status & Proof mechanism & Lemma \\
			\hline
			\(\MajClass\) & \(\classNP\)-complete & Reduction from \(3\)-SAT using majority-gate conditional transmission & \ref{lem:d2} \\
			\(\AndOrClass\) & \(\classNP\)-complete & Control-coordinate reduction from \(\MajClass\) & \ref{lem:f6} \\
			\(\OrAndClass\) & \(\classNP\)-complete & Coordinatewise duality from \(\AndOrClass\) & \ref{lem:f2} \\
			\(\AffClass\) & Polynomial time & Affine lift, linear systems, and Möbius inversion & \ref{lem:l1} \\
			\(\ConjClass\) & Polynomial time & Zero-set reachability and strongly connected components & \ref{lem:p6-p} \\
			\(\DisjClass\) & Polynomial time & Coordinatewise duality from \(\ConjClass\) & \ref{lem:s6-p} \\
			\hline
		\end{tabular}
		\endgroup
		\caption{Proof map for the boundary classes used in the dichotomy for fixed
		\(k\ge2\). The hard rows are proved by polynomial-time reductions, while
		the tractable rows are proved by explicit algorithms.}
		\label{tab:boundary-summary}
	\end{table}

	\subsection{Hardness reductions}

	The first three boundary results prove \(\classNP\)-completeness. The
	starting point is the majority self-dual class, where conditional transmission
	encodes a \(3\)-SAT instance. The two mixed monotone hard classes then follow
	by reductions from this boundary case and by coordinatewise duality.

	\begin{lemma}[majority self-dual boundary]\label{lem:d2}
		For every fixed integer \(k\ge 2\),
		\(k\)-PLCE\((\MajClass)\) is \(\classNP\)-complete.
	\end{lemma}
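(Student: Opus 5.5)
The plan has two parts: membership in \(\classNP\), and hardness by a polynomial-time reduction from \(3\)-SAT.

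\textbf{Membership.} A certificate is a configuration \(x\in\mathbb B^n\). Since \(k\) is fixed, we evaluate the coordinate circuits \(k\) times, in time \(O(ks)\), to obtain \(f(x),\ldots,f^k(x)\). We then check that \(f^k(x)=x\) and that \(f^\ell(x)\neq x\) for every \(\ell<k\). This needs no structure of \(\MajClass\).

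\textbf{Hardness construction.} Given a \(3\)-CNF formula \(\psi\) with variables \(y_1,\ldots,y_n\) and clauses \(C_1,\ldots,C_m\), we build a network over \(\{\operatorname{maj}_3\}\) with three groups of coordinates.
\begin{itemize}
\item A reference pair \(z,o\), with \(z'=z\) and \(o'=o\). Projections belong to the clone, and a projection is realised by the circuit \(\operatorname{maj}(x,x,x)\).
\item Dual-rail coordinates \(p_i,q_i\) for \(i\in[n]\), also frozen by projections. The intended reading is \(p_i=y_i\) and \(q_i=\neg y_i\).
\item A ring \(c_1,\ldots,c_k\) with \(c_j'=c_{j-1}\) for \(j\ge2\).
\end{itemize}
The only nontrivial rule is \(c_1'=G(c_k)\). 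Here \(G\) is a composition of gated transmissions:
\begin{itemize}
\item for each \(i\), a consistency gate \(s\mapsto\operatorname{maj}(s,p_i,q_i)\);
\item one final satisfaction gate \(s\mapsto\operatorname{maj}(s,z,\mathrm{Sat})\).
\end{itemize}
The value \(\mathrm{Sat}\) is a monotone formula in the rails: each literal \(y_i\) is replaced by \(p_i\) and each literal \(\neg y_i\) by \(q_i\). We implement \(\alpha\wedge\beta\) as \(\operatorname{maj}(\alpha,\beta,z)\) and \(\alpha\vee\beta\) as \(\operatorname{maj}(\alpha,\beta,o)\). The whole circuit has size \(O(n+m+k)\), and every gate is \(\operatorname{maj}_3\).

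By Example~3, \(\operatorname{maj}(s,u,v)\) equals \(s\) when \(u\neq v\) and equals the constant \(u\) when \(u=v\). Hence \(G\) is either the identity on \(c_k\) or a constant. Everything except the ring is frozen, so on any periodic orbit all frozen coordinates are constant.

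\textbf{Correctness.} If \(\psi\) has a satisfying assignment \(a\), set \(z=0\), \(o=1\), \(p_i=a_i\) and \(q_i=1-a_i\). Then every gate is transparent, because \(\mathrm{Sat}=1\neq z\). The ring is a pure cyclic shift, and the pattern \(10\cdots0\) has exact period \(k\).

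Conversely, suppose some \(x\) has exact period \(k\ge2\). The ring then cannot be eventually constant, since a constant ring with frozen coordinates gives a fixed point, of period \(1\). So \(G\) must be the identity, which requires every gate to be transparent. We consider the possible values of \(z\) and \(o\).
\begin{itemize}
\item \emph{Case \(z=o=b\).} Every \(\wedge\)/\(\vee\) gadget becomes \(\operatorname{maj}(\alpha,\beta,b)\). If the final gate is transparent, then \(\mathrm{Sat}\neq b\), and the consistency gates give \(p_i\neq q_i\). I expect the case analysis to show that this forces a contradiction or reduces to the next cases; otherwise we add a gate \(\operatorname{maj}(s,z,o)\) to \(G\), which kills the case outright and is the safe fallback.
\item \emph{Case \(z=0\), \(o=1\).} The gadgets compute genuine AND and OR. Transparency gives \(p_i\neq q_i\) for all \(i\), so \(a_i:=p_i\) is a consistent assignment with \(\mathrm{Sat}=\psi(a)=1\).
\item \emph{Case \(z=1\), \(o=0\).} Since \(f\) is self-dual, the complement of \(x\) is also a period-\(k\) point, and it falls in the previous case.
\end{itemize}

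\textbf{Main obstacle.} The delicate step is excluding spurious cycles. Every frozen-layer configuration must make \(G\) constant unless it encodes a genuine satisfying assignment, and this must be achieved without constants, which \(\MajClass\) lacks. The reference pair \((z,o)\) together with self-duality is the mechanism for this. To close every loophole robustly I would include the extra gate \(\operatorname{maj}(s,z,o)\): it is transparent exactly when \(z\neq o\), which disposes of the degenerate case by construction.
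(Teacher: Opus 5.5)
Your reduction is correct, and it is organised differently from the paper's.

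The paper keeps the checks as stateful coordinates:
\begin{itemize}
\item clause vertices $u_i$ updated by $\theta(u_i,L_{i1},L_{i2},L_{i3})$, where the vertex's own value selects OR versus AND;
\item consistency vertices $w_j=\operatorname{maj}(v_j,\bar v_j,w_j)$.
\end{itemize}
Its converse then needs three further ingredients. First, a monotonicity argument (a monotone self-map of $\mathbb B$ has no $2$-cycle) shows these vertices are constant on a periodic orbit. Second, a chain of $mn$ majority gates at $\delta_1$, one for each pair $(u_i,w_j)$, forces every $u_i$ to be opposite to every $w_j$. Third, both polarities are analysed explicitly.

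You freeze everything outside the ring and evaluate $\psi$ combinationally inside the feed circuit $G$. The frozen pair $(z,o)$ serves as surrogate constants. The single gate $\operatorname{maj}(s,z,o)$ does the job of the paper's pairwise chain. The polarity $z=1,o=0$ is disposed of by the global self-duality $f(\neg x)=\neg f(x)$.

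What each route buys:
\begin{itemize}
\item Your converse is shorter: ``$G$ is the identity or a constant'' is immediate, and no monotonicity lemma is needed.
\item Your instance has linear size, $O(n+m+k)$.
\item The paper's version needs neither the reference-pair device nor any appeal to global self-duality, at the price of a quadratic-size $\delta_1$ circuit.
\end{itemize}

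Your hedge about the case $z=o$ is unnecessary. If $z=o=0$, every gadget computes AND, so transparency of the final gate makes every literal rail occurring in $\psi$ equal to $1$. Together with $p_i\neq q_i$, the assignment $a_i=p_i$ then makes every literal occurrence true. The case $z=o=1$ reduces to this one by complementation. Keeping the gate $\operatorname{maj}(s,z,o)$ is harmless and simply shortens the case analysis.
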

	\begin{proof}
		Membership in \(\classNP\) follows because \(k\) is fixed: a certificate is
		a configuration \(x\), and one verifies \(f^k(x)=x\) and
		\(f^\ell(x)\neq x\) for all \(\ell<k\) by computing the first \(k\)
		iterates. For \(\classNP\)-hardness we reduce from \(3\)-SAT. Let
		\(\psi=\Gamma_1\wedge\cdots\wedge\Gamma_m\) be a \(3\)-CNF formula over
		variables \(y_1,\ldots,y_n\), with
		\(\Gamma_i=(L_{i1}\vee L_{i2}\vee L_{i3})\).
		We build a network \(f=\mathcal N_\psi\) in three layers. The literal layer has
		two fixed vertices \(v_j,\bar{v}_j\) for each variable \(y_j\), with local
		functions \(f_{v_j}(z)=z_{v_j}\) and
		\(f_{\bar{v}_j}(z)=z_{\bar{v}_j}\). The checking layer has clause vertices
		\(u_i\) and consistency vertices \(w_j\). The vertex \(u_i\) receives the
		three literal inputs corresponding to \(\Gamma_i\) and one self-loop. Its
		local rule is the ternary-majority circuit
		\[
		\theta(s,p,q,r)=
		\operatorname{maj}
		\bigl(
		\operatorname{maj}(s,p,q),
		\operatorname{maj}(s,p,r),
		\operatorname{maj}(s,q,r)
		\bigr),
		\]
		applied as
		\[
		f_{u_i}(z)=
		\theta(z_{u_i},z_{L_{i1}},z_{L_{i2}},z_{L_{i3}}).
		\]
		Here \(z_L\) means \(z_{v_j}\) when the literal \(L\) is
		\(y_j\), and \(z_{\bar{v}_j}\) when \(L\) is \(\neg y_j\).
		Thus, if \(s=1\), the rule returns
		\(p\vee q\vee r\); if \(s=0\), it returns \(p\wedge q\wedge r\).
		The vertex \(w_j\) receives \(v_j,\bar{v}_j\) and one self-loop and applies
		\[
		f_{w_j}(z)=
		\operatorname{maj}(z_{v_j},z_{\bar{v}_j},z_{w_j}).
		\]

		The cycle layer consists of vertices \(\delta_1,\ldots,\delta_k\). For
		\(r=2,\ldots,k\), set \(f_{\delta_r}(z)=z_{\delta_{r-1}}\). The vertex
		\(\delta_1\) is represented by an acyclic circuit over ternary-majority
		gates. Fix an ordering
		\[
		(i_1,j_1),\ldots,(i_{mn},j_{mn})
		\]
		of all pairs in \([m]\times[n]\). Put \(\chi_0=z_{\delta_k}\) and
		\[
		\chi_h=
		\operatorname{maj}\left(\chi_{h-1},z_{u_{i_h}},z_{w_{j_h}}\right),
		\qquad h=1,\ldots,mn.
		\]
		Then \(f_{\delta_1}(z)=\chi_{mn}\).
		Every described local function is a composition of ternary-majority gates
		and projections, hence belongs to \(\MajClass\). If the fixed basis for \(\MajClass\) is
		not the ternary-majority basis, each ternary-majority gate is replaced by its
		fixed circuit over that basis. The clause coordinates use four such gates,
		the consistency coordinates use one, and the \(\delta_1\)-coordinate uses
		\(mn\). Therefore the output instance has polynomial size.

		Figure~\ref{fig:d2-network-construction} summarises the role of the three
		layers in the reduction.

		\begin{figure}[H]
			\centering
			\resizebox{0.99\textwidth}{!}{%
			\begin{tikzpicture}[
				x=1cm,y=1cm,
				reductionLayer/.style={
					draw=csfInk!18,
					rounded corners=4pt,
					line width=0.45pt
				},
				reductionNode/.style={
					circle,
					draw=csfInk!60,
					fill=white,
					minimum size=7.5mm,
					inner sep=0pt,
					font=\small
				},
				reductionBox/.style={
					draw=csfInk!45,
					fill=white,
					rounded corners=2pt,
					line width=0.45pt,
					align=center,
					font=\small,
					inner sep=4pt
				},
				reductionArrow/.style={
					-{Latex[length=2.2mm]},
					line width=0.55pt,
					draw=csfInk!70
				},
				checkArrow/.style={reductionArrow, draw=csfBlue!75!black},
				cycleArrow/.style={reductionArrow, draw=csfTeal!75!black},
				feedArrow/.style={reductionArrow, draw=csfRed!72!black}
				]

				\fill[reductionLayer, fill=csfBlue!5] (-0.25,-2.05) rectangle (3.35,2.50);
				\fill[reductionLayer, fill=csfGray!30] (3.65,-2.05) rectangle (9.45,2.50);
				\fill[reductionLayer, fill=csfTeal!5] (9.80,-2.05) rectangle (14.85,2.50);

				\node[font=\bfseries\small, text=csfInk] at (1.55,2.15) {Literal layer};
				\node[font=\bfseries\small, text=csfInk] at (6.55,2.15) {Checking layer};
				\node[font=\bfseries\small, text=csfInk] at (12.33,2.15) {Cycle layer};
				\node[font=\scriptsize, text=csfInk!70] at (1.55,1.83)
					{fixed assignment vertices};
				\node[font=\scriptsize, text=csfInk!70] at (6.55,1.83)
					{\(\operatorname{maj}\)-gates};
				\node[font=\scriptsize, text=csfInk!70] at (12.33,1.83)
					{period-\(k\) shift register};

				\node[reductionNode, fill=csfBlue!9] (l1) at (0.75,0.82) {$v_1$};
				\node[reductionNode, fill=csfBlue!9] (lb1) at (2.35,0.82) {$\bar{v}_1$};
				\node[font=\large, text=csfInk!65] at (1.55,-0.10) {$\cdots$};
				\node[reductionNode, fill=csfBlue!9] (ln) at (0.75,-1.02) {$v_n$};
				\node[reductionNode, fill=csfBlue!9] (lbn) at (2.35,-1.02) {$\bar{v}_n$};
				\draw[checkArrow, draw=csfBlue!55!black] (l1) edge[loop above,
					looseness=5] (l1);
				\draw[checkArrow, draw=csfBlue!55!black] (lb1) edge[loop above,
					looseness=5] (lb1);
				\draw[checkArrow, draw=csfBlue!55!black] (ln) edge[loop below,
					looseness=5] (ln);
				\draw[checkArrow, draw=csfBlue!55!black] (lbn) edge[loop below,
					looseness=5] (lbn);

				\node[reductionBox, draw=csfBlue!55!black, fill=csfBlue!7,
				text width=4.70cm] (G) at (6.55,0.88)
					{\textbf{Clause tests}\\[-0.5mm]
					\(f_{u_i}=\theta(u_i,L_{i1},L_{i2},L_{i3})\)\\[-0.5mm]
					\(i=1,\ldots,m\)};
				\node[reductionBox, draw=csfGreen!60!black, fill=csfGreen!9,
				text width=4.70cm] (B) at (6.55,-0.88)
					{\textbf{Consistency tests}\\[-0.5mm]
					\(f_{w_j}=\operatorname{maj}(v_j,\bar{v}_j,w_j)\)\\[-0.5mm]
					\(j=1,\ldots,n\)};

				\draw[checkArrow] (2.90,0.82) -- (G.west);
				\draw[checkArrow, draw=csfGreen!70!black] (2.90,-0.82) -- (B.west);

				\node[reductionNode, draw=csfRed!70!black, fill=csfRed!9] (d1)
					at (11.05,0.00) {$\delta_1$};
				\node[reductionNode, draw=csfTeal!75!black, fill=csfTeal!10] (d2)
					at (12.25,1.20) {$\delta_2$};
				\node[font=\large, text=csfInk!65] (ddots) at (13.45,0.00) {$\cdots$};
				\node[reductionNode, draw=csfTeal!75!black, fill=csfTeal!10] (dk)
					at (12.25,-1.20) {$\delta_k$};

				\draw[cycleArrow] (11.14,0.48)
					arc[start angle=157,end angle=113,radius=1.22];
				\draw[cycleArrow] (12.72,1.14)
					arc[start angle=67,end angle=23,radius=1.22];
				\draw[cycleArrow] (13.36,-0.48)
					arc[start angle=-23,end angle=-67,radius=1.22];
				\draw[cycleArrow] (11.78,-1.14)
					arc[start angle=-113,end angle=-157,radius=1.22];

				\draw[feedArrow] (G.east) .. controls (9.55,0.88) and (10.05,0.40) ..
					node[above, pos=0.38, font=\scriptsize, text=csfRed!75!black,
					fill=white, inner sep=0.6pt] {all pairs}
					(d1.160);
				\draw[feedArrow] (B.east) .. controls (9.55,-0.88) and (10.05,-0.40) ..
					node[below, pos=0.38, font=\scriptsize, text=csfRed!75!black,
					fill=white, inner sep=0.6pt] {all pairs}
					(d1.-160);
			\end{tikzpicture}}
			\caption{Layered construction of the \(\MajClass\)-network
			\(\mathcal N_\psi\) used in Lemma~\ref{lem:d2}. The literal vertices are
			fixed by self-loops. The checking vertices encode clause satisfaction and
			consistency by ternary-majority circuits. The cycle layer is a circular shift
			register; \(\delta_1\) is represented by a ternary-majority chain over all
			pairs \((u_i,w_j)\). This chain transmits the value of \(\delta_k\) exactly
			when every clause-test value is opposite to every consistency-test value.}
			\label{fig:d2-network-construction}
		\end{figure}

		Assume first that \(\psi\) is satisfied by an assignment \(a\). Put
		\(z_{v_j}^{(0)}=a(y_j)\), \(z_{\bar{v}_j}^{(0)}=1-a(y_j)\),
		\(z_{u_i}^{(0)}=1\), \(z_{w_j}^{(0)}=0\), and
		\(\left(z_{\delta_1}^{(0)},\ldots,z_{\delta_k}^{(0)}\right)
		=\left(1,0,\ldots,0\right)\). The literal values are fixed. Since each
		clause contains a true literal and \(\theta(1,p,q,r)=p\vee q\vee r\), every
		\(u_i\) remains equal to \(1\). Each \(w_j\) sees exactly one of
		\(v_j,\bar{v}_j\) equal to \(1\) and its own value equal to \(0\), so
		it remains equal to \(0\). Hence the checking layer is fixed. At
		\(\delta_1\), every pair \((u_i,w_j)\) has unequal values. Each gate in the
		chain therefore transmits its first input, so the update at \(\delta_1\) is
		the value of \(\delta_k\).
		Together with \(f_{\delta_r}=z_{\delta_{r-1}}\) for \(r\ge 2\), the cycle
		layer performs a cyclic shift of \(\left(1,0,\ldots,0\right)\). Therefore
		the constructed network has a cyclic attractor of exact period \(k\).

		Conversely, suppose that \(z^{(0)}\in\Phi_k(f)\) for the constructed network.
		The literal vertices are constant. Once these literal values are fixed, each
		checking vertex has only its own current value as a time-dependent input:
		\(u_i\) evolves by a map
		\(\tau_i(s)=\theta(s,z_{L_{i1}},z_{L_{i2}},z_{L_{i3}})\), and
		\(w_j\) evolves by a map
		\(\sigma_j(s)=\operatorname{maj}(z_{v_j},z_{\bar{v}_j},s)\). Both maps are
		monotone maps from \(\mathbb B\) to itself. Since the whole trajectory is
		periodic, the value sequence of each checking vertex is periodic under its
		one-dimensional map. A nonconstant periodic sequence on \(\mathbb B\) would
		have to alternate \(0,1,0,1,\ldots\), which would force
		the corresponding one-dimensional map to send \(0\) to \(1\) and \(1\) to
		\(0\). This contradicts monotonicity. Hence all checking vertices are
		constant on the periodic trajectory.
		Let \(\gamma_i\) and \(\beta_j\) denote the constant values of \(u_i\) and
		\(w_j\), respectively, on this trajectory.

		If some pair \((\gamma_i,\beta_j)\) has equal values, consider the last such
		pair in the fixed ordering used by the \(\delta_1\)-circuit. At the corresponding
		ternary-majority gate, the output becomes that common value, independently of
		the incoming value. All later pairs are unequal and therefore only transmit
		the current value. Hence the final output of the \(\delta_1\)-circuit is
		independent of \(\delta_k\). Then, after at most \(k\) updates, all vertices
		\(\delta_1,\ldots,\delta_k\) are constant and remain constant thereafter.
		Periodicity would force them to have been constant already, contradicting
		exact period \(k\ge 2\). Therefore \(\gamma_i\neq \beta_j\) for every \(i\)
		and \(j\).

		The last condition implies that all \(\gamma_i\) have the same value and all
		\(\beta_j\) have the opposite value. Thus either \(\gamma_i=1\) for all \(i\)
		and \(\beta_j=0\) for all \(j\), or \(\gamma_i=0\) for all \(i\) and
		\(\beta_j=1\) for all \(j\). The consistency gate gives the only truth-table
		facts needed in the two cases that follow: if \(\beta_j=0\), then \(z_{v_j}\) and
		\(z_{\bar{v}_j}\) cannot both be \(1\), because
		\(\operatorname{maj}(1,1,0)=1\); if \(\beta_j=1\), then they cannot both be
		\(0\), because \(\operatorname{maj}(0,0,1)=0\).

		In the first case, define \(a(y_j)=z_{v_j}\) for every variable \(y_j\).
		Fix a clause \(\Gamma_i\). Since \(u_i\) is constant with value \(1\), its
		fixed-value equation gives
		\[
		1=\theta(1,z_{L_{i1}},z_{L_{i2}},z_{L_{i3}})
		=
		z_{L_{i1}}\vee z_{L_{i2}}\vee z_{L_{i3}} .
		\]
		Thus \(z_{L_{it}}=1\) for some \(t\in\{1,2,3\}\). If
		\(L_{it}=y_j\), then \(a(y_j)=z_{v_j}=1\). If
		\(L_{it}=\neg y_j\), then \(z_{\bar{v}_j}=1\), and the equality
		\(0=\operatorname{maj}(z_{v_j},z_{\bar{v}_j},0)\) forces
		\(z_{v_j}=0\). Hence \(a(y_j)=0\), so the literal \(\neg y_j\) is true.
		In either subcase, \(\Gamma_i\) is satisfied by \(a\).

		In the second case, define \(a(y_j)=1-z_{v_j}\) for every variable
		\(y_j\). Fix a clause \(\Gamma_i\). Since \(u_i\) is constant with value
		\(0\), its fixed-value equation gives
		\[
		0=\theta(0,z_{L_{i1}},z_{L_{i2}},z_{L_{i3}})
		=
		z_{L_{i1}}\wedge z_{L_{i2}}\wedge z_{L_{i3}} .
		\]
		Thus \(z_{L_{it}}=0\) for some \(t\in\{1,2,3\}\). If
		\(L_{it}=y_j\), then \(z_{v_j}=0\) and \(a(y_j)=1\). If
		\(L_{it}=\neg y_j\), then \(z_{\bar{v}_j}=0\), and the equality
		\(1=\operatorname{maj}(z_{v_j},z_{\bar{v}_j},1)\) forces
		\(z_{v_j}=1\). Hence \(a(y_j)=0\), so the literal \(\neg y_j\) is true.
		Again, \(\Gamma_i\) is satisfied by \(a\). In both cases every clause of
		\(\psi\) is satisfied. The reduction is correct, so
		\(k\)-PLCE\((\MajClass)\) is \(\classNP\)-hard; together with membership in
		\(\classNP\), it is \(\classNP\)-complete.
	\end{proof}

	\begin{lemma}[mixed monotone AND-over-OR boundary]\label{lem:f6}
		For every fixed integer \(k\ge 2\),
		\(k\)-PLCE\((\AndOrClass)\) is \(\classNP\)-complete.
	\end{lemma}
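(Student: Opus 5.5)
We prove membership in \(\classNP\) exactly as in Lemma~\ref{lem:d2}: since \(k\) is fixed, a configuration \(x\) is a certificate, and we check \(f^k(x)=x\) and \(f^\ell(x)\neq x\) for \(\ell<k\) by computing \(k\) iterates of the circuits. For hardness we reduce \(k\)-PLCE\((\MajClass)\), which is \(\classNP\)-hard by Lemma~\ref{lem:d2}, to \(k\)-PLCE\((\AndOrClass)\) by adding one \emph{control coordinate}. The point is that \(\operatorname{maj}\) itself lies outside \(\AndOrClass\), but its conjunction with a fresh variable lies inside.

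The building block is the fixed \(4\)-ary function \(h(c,x,y,z)=c\wedge\operatorname{maj}(x,y,z)\). We check the three defining properties of \(\AndOrClass\) from Table~\ref{tab:boundary-classes}. It is monotone. It preserves \(0\) and \(1\), since \(h(1,1,1,1)=1\) and \(h(0,0,0,0)=0\). It is bounded above by the argument \(c\). Hence \(h\in\AndOrClass\). Because the basis \(\mathcal B_{\AndOrClass}\) is fixed, we fix once and for all a circuit for \(h\) over it, and we also fix a circuit for a projection-like function bounded by \(c\), namely \(c\wedge x\).

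Let \(f=(f_1,\ldots,f_n)\) be a \(\MajClass\)-instance, with each \(f_i\) given as a circuit of ternary-majority gates, after substituting fixed circuits if another basis is used. We build \(f'\) on coordinates \(1,\ldots,n,c\) as follows:
\[
f'_c(z)=z_c .
\]
For each \(i\in[n]\), \(f'_i\) is obtained from the circuit of \(f_i\) by replacing every gate \(\operatorname{maj}(a,b,d)\) with the circuit for \(h(z_c,a,b,d)\). If \(f_i\) is a bare projection \(x_j\), we use \(z_c\wedge z_j\) instead.

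Every gate is now an \(\AndOrClass\)-circuit, so \(f'_i\in\AndOrClass\). The size of \(f'\) grows by only a constant factor, so the reduction is polynomial.

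For correctness, we use that \(c\) is constant along every trajectory, which gives two cases.
\begin{itemize}
\item If \(z_c=1\), then \(h(1,a,b,d)=\operatorname{maj}(a,b,d)\) and \(1\wedge x=x\). Hence \(f'\) restricted to \(\{z_c=1\}\) coincides with \(f\), and \(x\in\Phi_k(f)\) if and only if \((x,1)\in\Phi_k(f')\).
\item If \(z_c=0\), then every gate outputs \(0\), so \(f'(x,0)=(0,0)\). The only periodic configuration in this slice is the fixed point \((0,\ldots,0)\), which has period \(1\neq k\) because \(k\ge2\).
\end{itemize}
Therefore \(\Phi_k(f')\neq\emptyset\) if and only if \(\Phi_k(f)\neq\emptyset\).

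The main point to verify carefully is that every coordinate lands in \(\AndOrClass\) and is realised by a legitimate circuit. This holds because each gate, including the output gate, is bounded above by \(z_c\) while still preserving \(1\). A naive attempt to simulate \(\operatorname{maj}\) directly would fail, since \(\operatorname{maj}\notin\AndOrClass\). The control coordinate circumvents this at the cost of the degenerate \(z_c=0\) slice, and that slice is harmless because it collapses to a fixed point.
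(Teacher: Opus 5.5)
Your proposal is correct and is essentially the paper's proof. You use the same identity-updated control coordinate, the same gatewise replacement of each \(\MajClass\)-gate \(b\) by a fixed \(\AndOrClass\)-circuit for \(y\wedge b\), and the same two-case argument on the constant control value (collapse to the all-zero fixed point versus exact simulation of \(f\)); the only cosmetic difference is that the paper appends a \(y\wedge u\) gate at every output, whereas you do this only for bare projections.
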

	\begin{proof}
		Membership in \(\classNP\) follows because \(k\) is fixed and a candidate configuration
		can be verified by computing its first \(k\) iterates. For hardness we reduce
		from \(k\)-PLCE\((\MajClass)\). Given a \(\MajClass\)-network
		\(f=(f_1,\ldots,f_n)\), add a control coordinate \(y\) and define a network
		\(g=(g_0,g_1,\ldots,g_n)\) by
		\[
		g_0(y,x)=y,\qquad g_i(y,x)=y\wedge f_i(x),\quad i\in[n].
		\]
		The projection \(g_0\) belongs to \(\AndOrClass\). For \(i\ge1\), the
		function \(f_i\in \MajClass\) is monotone and self-dual, so
		\(f_i(0,\ldots,0)=0\) and \(f_i(1,\ldots,1)=1\). Hence
		\(g_i=y\wedge f_i\) is monotone, preserves \(0\) and \(1\), and is bounded
		above by \(y\). By the standard description of \(\AndOrClass\)
		\cite{lau2006}, each coordinate lies in \(\AndOrClass\).

		It remains to make the circuit representation explicit. Let
		\(\mathcal B_{\MajClass}\) and \(\mathcal B_{\AndOrClass}\) be the fixed bases from
		the input encoding. For each gate \(b\in\mathcal B_{\MajClass}\) of arity \(r\),
		the function
		\[
		\widehat b(y,u_1,\ldots,u_r)=y\wedge b(u_1,\ldots,u_r)
		\]
		belongs to \(\AndOrClass\), and so does
		\(\alpha(y,u)=y\wedge u\). We fix
		\(\mathcal B_{\AndOrClass}\)-circuits for these finitely many functions.
		Given a \(\mathcal B_{\MajClass}\)-circuit for \(f_i\), replace each
		\(b\)-gate by the fixed circuit for \(\widehat b\) and put the fixed
		\(\alpha\)-circuit at the output. A direct induction over the source circuit
		shows that the resulting circuit computes \(y\wedge f_i(x)\). Its size is
		linear in the size of the source circuit, so the output instance is
		admissible and has polynomial size.

		If \(x^{(0)}\in\Phi_k(f)\), then
		\(z^{(0)}=\left(1,x^{(0)}\right)\) lies in \(\Phi_k(g)\): the control
		coordinate remains equal to \(1\), and the remaining coordinates follow
		exactly the trajectory of \(f\). Conversely, let
		\(z^{(0)}=\left(y^{(0)},x^{(0)}\right)\in\Phi_k(g)\). Since \(g_0(y,x)=y\),
		the control coordinate is constant. If \(y^{(0)}=0\), all original
		coordinates become \(0\) after one step, so no cycle of exact period
		\(k\ge 2\) is possible. Hence \(y^{(0)}=1\), and then the \(x\)-coordinates
		satisfy \(x^{(t+1)}=f\left(x^{(t)}\right)\). Exact period \(k\) for
		\(z^{(0)}\) therefore implies exact period \(k\) for \(x^{(0)}\) under \(f\).

		The construction is polynomial-time and preserves positive instances in both
		directions.
		Since \(k\)-PLCE\((\MajClass)\) is \(\classNP\)-complete by
		Lemma~\ref{lem:d2}, \(k\)-PLCE\((\AndOrClass)\) is \(\classNP\)-complete.
	\end{proof}

	\begin{lemma}[mixed monotone OR-over-AND boundary]\label{lem:f2}
		For every fixed integer \(k\ge 2\),
		\(k\)-PLCE\((\OrAndClass)\) is \(\classNP\)-complete.
	\end{lemma}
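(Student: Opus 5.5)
We follow the route indicated in Table~\ref{tab:boundary-summary}: reduce from \(k\)-PLCE\((\AndOrClass)\), which is \(\classNP\)-complete by Lemma~\ref{lem:f6}, via coordinatewise complementation. Membership in \(\classNP\) is the same as before: since \(k\) is fixed, a certificate is a configuration \(x\), and we check \(f^k(x)=x\) and \(f^\ell(x)\neq x\) for \(\ell<k\) by evaluating the first \(k\) iterates.

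For hardness, let \(\neg\colon\mathbb B^n\to\mathbb B^n\) denote coordinatewise negation. Given an \(\AndOrClass\)-network \(f=(f_1,\ldots,f_n)\), we define the dual network \(f^d=\neg\circ f\circ\neg\), whose coordinates are \(f_i^d(x)=\neg f_i(\neg x)\). Three facts are needed.

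\emph{Dynamics.} Since \(\neg\) is an involution, \((f^d)^t=\neg\circ f^t\circ\neg\) for every \(t\). Hence \(x\in\Phi_k(f)\) if and only if \(\neg x\in\Phi_k(f^d)\), so exact periods are preserved in both directions.

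\emph{Class membership.} Duality maps \(\AndOrClass\) onto \(\OrAndClass\). It preserves monotonicity and preservation of \(0\) and \(1\). If \(f_i\le x_j\), then \(f_i^d(x)=\neg f_i(\neg x)\ge \neg\neg x_j=x_j\). At the level of generators, the dual of \(x\wedge(y\vee z)\) is \(x\vee(y\wedge z)\), so \([\,\cdot\,]^d\) sends \(\AndOrClass=[x\wedge(y\vee z)]\) into \([x\vee(y\wedge z)]=\OrAndClass\). We cite the standard description in~\cite{lau2006} for this.

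\emph{Circuit representation.} This is the only point needing care, and we expect it to be the main obstacle. We cannot literally insert negation gates, since \(\neg\notin\OrAndClass\). Instead we proceed gate by gate. For each gate \(b\in\mathcal B_{\AndOrClass}\), its dual \(b^d(u)=\neg b(\neg u)\) lies in \(\OrAndClass\), so we fix once and for all a \(\mathcal B_{\OrAndClass}\)-circuit for each of these finitely many \(b^d\). Given a circuit for \(f_i\), we replace every \(b\)-gate by the fixed circuit for \(b^d\) and keep the same input variables. Induction over the circuit shows that each gate output in the new circuit, evaluated at \(x\), equals the negation of the corresponding gate output of the old circuit evaluated at \(\neg x\). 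Thus the new circuit computes \(f_i^d\), and its size grows only by a constant factor. No constants are involved, since neither class contains constants.

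Combining the three facts, the map \(f\mapsto f^d\) is a polynomial-time many-one reduction from \(k\)-PLCE\((\AndOrClass)\) to \(k\)-PLCE\((\OrAndClass)\) that preserves positive and negative instances. With Lemma~\ref{lem:f6} and membership in \(\classNP\), this gives \(\classNP\)-completeness.

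As a backup if the dual-class identification needs a more direct argument, we can mirror the proof of Lemma~\ref{lem:f6}. Starting from a \(\MajClass\)-network, we set \(g_0=y\) and \(g_i=y\vee f_i(x)\). This \(g\) lies in \(\OrAndClass\). With \(y=1\) everything collapses to all ones, so no cycle of exact period \(k\ge2\) exists, while \(y=0\) reproduces \(f\) exactly.
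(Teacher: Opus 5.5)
Your proposal is correct and follows essentially the same route as the paper: a coordinatewise dualisation reduction from \(k\)-PLCE\((\AndOrClass)\), with each basis gate replaced by a fixed \(\mathcal B_{\OrAndClass}\)-circuit for its dual, and the conjugation by complementation carrying exact periods across in both directions. Your backup construction \(g_i=y\vee f_i\) from \(\MajClass\) is also valid (it mirrors Lemma~\ref{lem:f6} directly), but it is not needed.
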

	\begin{proof}
		Membership in \(\classNP\) follows from the same fixed-period verification.
		The same lattice description identifies \(\OrAndClass\) as the
		coordinatewise dual of \(\AndOrClass\): if
		\(h\in \AndOrClass\), then
		\(h^\ast(x)=\neg h(\neg x_1,\ldots,\neg x_n)\) belongs to
		\(\OrAndClass\), and conversely~\cite{lau2006}. Given an
		\(\AndOrClass\)-network \(f\), define the coordinatewise dual network
		\[
		g_i(x)=\neg f_i(\neg x_1,\ldots,\neg x_n),\qquad i\in[n].
		\]
		Then \(g\) is an \(\OrAndClass\)-network.

		The transformation also respects the fixed-basis representation. Let
		\(\mathcal B_{\AndOrClass}\) and \(\mathcal B_{\OrAndClass}\) be the selected
		bases. For each \(b\in\mathcal B_{\AndOrClass}\), its dual
		\(b^\ast(u)=\neg b(\neg u)\) belongs to \(\OrAndClass\), so we fix a
		\(\mathcal B_{\OrAndClass}\)-circuit for \(b^\ast\). Replacing every
		\(b\)-gate in the input circuit by this fixed dual circuit gives, by
		bottom-up induction, a circuit for \(g_i\). The transformation is
		polynomial-time.

		The complement map \(c(x)=\neg x\) satisfies \(g(c(x))=c(f(x))\). Hence
		\(x\in\Phi_k(f)\) if and only if \(c(x)\in\Phi_k(g)\). The polynomial-time
		transformation and Lemma~\ref{lem:f6} show that
		\(k\)-PLCE\((\OrAndClass)\) is \(\classNP\)-complete.
	\end{proof}

	\subsection{Polynomial-time algorithms}

	The remaining boundary classes are tractable. The affine case reduces exact
	periodicity to linear algebra, while the conjunctive and disjunctive cases
	reduce it to reachability conditions for zero sets or, dually, one sets.

	\begin{lemma}[affine boundary]\label{lem:l1}
		For every fixed integer \(k\ge 2\),
		\(k\)-PLCE\((\AffClass)\in \classP\).
	\end{lemma}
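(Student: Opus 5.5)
The plan has three steps: compute an explicit affine form \(f(x)=Ax\oplus b\), reduce \(f^t(x)=x\) to linear systems over \(\mathbb F_2\), and count exact-period solutions by Möbius inversion over the divisors of \(k\).

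\emph{Step 1: extract the affine form.} Every \(f_i\in\AffClass\) has the form \(c_i\oplus\bigoplus_{j\in S_i}x_j\). Evaluate each coordinate circuit bottom-up over the fixed basis. At each gate, store its output as an affine form, namely a vector in \(\mathbb F_2^{n}\) together with a constant bit. Each basis gate is a fixed affine function of its inputs, so the form at a gate is a fixed \(\mathbb F_2\)-combination of the forms of its (boundedly many) inputs, computable in \(O(n)\) time. This gives \(A\in\mathbb F_2^{n\times n}\) and \(b\in\mathbb F_2^n\) in time \(O(sn)\). An equivalent route is to evaluate \(f(0)\) and \(f(e_j)\oplus f(0)\) for the unit vectors \(e_j\).

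\emph{Step 2: periodicity equations.} For each \(t\ge1\),
\[
f^t(x)=A^t x\oplus b_t,\qquad b_t=\bigoplus_{r=0}^{t-1}A^r b ,
\]
so \(F_t=\{x: f^t(x)=x\}\) is the solution set of the linear system \((A^t\oplus I)x=b_t\). Since \(k\) is fixed, only the \(t\mid k\) are needed, and \(A^t,b_t\) are obtained with \(O(k)\) matrix products. Gaussian elimination decides whether \(F_t\) is empty and, if not, gives \(|F_t|=2^{\,n-\operatorname{rank}(A^t\oplus I)}\).

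\emph{Step 3: exact period.} A point with \(f^k(x)=x\) has exact period equal to some divisor \(d\) of \(k\). Moreover, \(F_d\subseteq F_k\) for \(d\mid k\), and \(F_{d}\cap F_{d'}=F_{\gcd(d,d')}\). Hence \(\Phi_k(f)=F_k\setminus\bigcup_{p}F_{k/p}\), where \(p\) ranges over the prime divisors of \(k\). By inclusion–exclusion, equivalently Möbius inversion on the divisor lattice,
\[
|\Phi_k(f)|=\sum_{d\mid k}\mu(k/d)\,|F_d| .
\]
The number of terms depends only on \(k\). The answer is yes iff this integer is positive.

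The main obstacle is arithmetic size: the \(|F_d|\) are powers of two with up to \(n\) bits. That is still polynomial: there are boundedly many signed sums of \(n\)-bit integers. A cleaner alternative avoids big numbers. Each \(F_{k/p}\) is empty or an affine subspace of the affine space \(F_k\), so either \(F_k=\emptyset\) or some \(F_{k/p}=F_k\), or else the nonempty proper ones are each of dimension at most \(\dim F_k-1\). A finite field-size union argument then shows the union is proper unless it is all of \(F_k\). This needs care over \(\mathbb F_2\), since two proper hyperplanes can cover only half the space each but three can cover everything. So I would rely on the counting formula, whose correctness needs only the lattice identity above. I would also check that the circuit-to-affine-form conversion respects the fixed-basis convention of the Methods.
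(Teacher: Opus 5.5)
Your proposal is correct and follows the paper's proof essentially step for step. You extract \(f(x)=Ax\oplus b\) bottom-up in \(O(sn)\) time. You reduce \(f^d(x)=x\) for \(d\mid k\) to the linear systems \((A^d\oplus I)x=b_d\) solved by Gaussian elimination; your formula \(b_t=\bigoplus_{r<t}A^r b\) is exactly what the paper's affine lift \(M^t\) computes. You then apply Möbius inversion \(|\Phi_k(f)|=\sum_{d\mid k}\mu(k/d)\,|F_d|\). Your remark that the \(|F_d|\) are \(n\)-bit integers, so the final signed sum costs polynomial rather than constant time, is a small precision that the paper's ``\(O(k)\) arithmetic operations'' leaves implicit. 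The abandoned subspace-covering alternative is not needed.
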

	\begin{proof}
		Let \(f=(f_1,\ldots,f_n)\) be an \(\AffClass\)-network. Its global map is affine
		over \(\mathbb F_2\). From the input circuits, the affine representation of
		each coordinate is computed bottom-up: each gate contributes the xor of the
		coefficient vectors of its inputs, and the constant gate contributes the
		constant term. Repeated input wires are handled by adding the same coefficient
		vector again over \(\mathbb F_2\). With the notation above, this bottom-up
		pass takes \(O(sn)\) time, because each gate updates a length-\(n\)
		coefficient vector. Thus we obtain
		\(f(x)=Ax\oplus b\), with
		\(A\in\mathbb F_2^{n\times n}\) and \(b\in\mathbb F_2^n\). For \(t\ge 1\), let
		\(\operatorname{Fix}(t)=
		\left|\left\{x\in\mathbb F_2^n:f^t(x)=x\right\}\right|\), and let
		\(N(d)\) be the number of configurations of minimal period \(d\). Since a
		point is fixed by \(f^t\)
		exactly when its minimal period divides \(t\),
		\(\operatorname{Fix}(t)=\sum_{d\mid t}N(d)\). Möbius inversion gives
		\[
		N(k)=\sum_{d\mid k}\mu\left(k/d\right)\operatorname{Fix}(d),
		\]
		where \(\mu\) is the Möbius function. Thus the problem reduces to computing
		\(\operatorname{Fix}(d)\) for all divisors \(d\mid k\) and testing whether
		\(N(k)>0\).

		Introduce the standard affine lift
		\[
		M=
		\begin{pmatrix}
			A & b\\
			0 & 1
		\end{pmatrix}
		\in \mathbb F_2^{(n+1)\times(n+1)}.
		\]
		By induction, \(M^t\) has the form
		\[
		M^t=
		\begin{pmatrix}
			A_t & b_t\\
			0 & 1
		\end{pmatrix},
		\qquad\text{with}\qquad
		f^t(x)=A_t x\oplus b_t.
		\]
		For each divisor \(d\mid k\), the equation \(f^d(x)=x\) is therefore the
		linear system \((A_d\oplus I)x=b_d\) over \(\mathbb F_2\). Gaussian
		elimination gives the number \(\operatorname{Fix}(d)\) of its solutions in
		\(O\left(n^3\right)\) time: if the system is consistent and has rank \(r\),
		then \(\operatorname{Fix}(d)=2^{n-r}\), and otherwise
		\(\operatorname{Fix}(d)=0\). The matrices
		\(M,\ldots,M^k\) are obtained
		sequentially in \(O\left(kn^3\right)\) time, and Gaussian elimination for
		the relevant divisors contributes another \(O\left(kn^3\right)\). Computing
		\(N(k)\) from the Möbius formula then adds only \(O\left(k\right)\)
		arithmetic operations. Hence existence of a cyclic attractor of exact period
		\(k\) is decidable in \(O\left(sn+kn^3\right)\) time.
		Since \(k\) is fixed in the decision problem, the total running time is
		polynomial in the input size, and the problem belongs to \(\classP\).
	\end{proof}

	\begin{lemma}[conjunctive boundary]\label{lem:p6-p}
		For every fixed integer \(k\ge 2\),
		\(k\)-PLCE\((\ConjClass)\in \classP\).
	\end{lemma}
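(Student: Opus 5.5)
We first normalise the network. Every function in \(\ConjClass\) is the constant \(0\), the constant \(1\), or a conjunction \(\bigwedge_{j\in S_i}x_j\), where an empty conjunction means \(1\). We compute this normal form for each coordinate by a bottom-up pass over its circuit, taking unions of variable sets at \(\wedge\)-gates and absorbing constants (\(0\) absorbs, \(1\) is neutral). This takes \(O(sn)\) time. We then form the dependency digraph \(D\) on \([n]\), with an arc \(j\to i\) whenever \(j\in S_i\).

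Call a coordinate \emph{killed} if it is reachable in \(D\) from a constant-\(0\) coordinate. On any periodic orbit a killed coordinate is identically \(0\), since zeros propagate along arcs and, once present, are renewed at every step. Let \(D'\) be the subgraph induced by the unkilled coordinates. For each nontrivial strongly connected component \(C\) of \(D'\) (one containing a cycle), compute its cyclicity \(c_C\), the gcd of its cycle lengths, by the standard BFS-level construction. Since \(k\) is fixed, we then test the following criterion:
\[
\text{for every prime power } p^a \text{ exactly dividing } k,\ \text{there is a component } C \text{ with } p^a\mid c_C .
\]
All of this is polynomial (SCC decomposition plus gcd computations), and the criterion involves only a constant number of prime powers.

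\emph{Sufficiency.} For each \(p^a\) pick a witnessing component \(C\). Partition \(C\) into its \(c_C\) cyclic classes, and set a single class residue modulo \(p^a\) to \(0\) and the rest of \(C\) to \(1\). Along any arc inside \(C\) the class index advances by one. Hence, as long as every in-neighbour of \(C\) from outside stays \(1\), the zero set of \(C\) is shifted cyclically with exact period \(p^a\). Set every other unkilled coordinate to \(1\) initially and every killed coordinate to \(0\). We then argue that the trajectory eventually enters an orbit whose period is the lcm of the chosen \(p^a\), that is, \(k\). The chosen components evolve with their own periods, except that zeros from one may flood another one downstream. To avoid this, choose witnesses that are minimal in the condensation order among eligible components with \(p^a\mid c_C\). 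Where that is impossible, replace the two patterns by a single pattern on the downstream component, with period the lcm of the two, noting that \(p^a\mid c_C\) is inherited by the projection argument below. Finally, a coordinate whose value is \(p^a\)-periodic and nonconstant persists in the limit orbit, so the exact period is divisible by each \(p^a\). Conversely, every coordinate downstream is a conjunction of shifted copies of these patterns, so the period divides \(k\).

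\emph{Necessity.} This is the main obstacle. Take \(x\in\Phi_k(f)\). We show by induction along a topological order of the condensation of \(D'\) that the restriction of the orbit to the union of components up to \(C\) has period dividing \(\operatorname{lcm}\{c_{C'}: C'\preceq C \text{ nontrivial}\}\). For a trivial component (a single node without loop), its value is a conjunction of earlier values at the previous time, so it adds no new period. For a nontrivial \(C\), use the identity \(f^{t}_i(x)=\bigwedge_{j}x_j\), the conjunction running over walks of length exactly \(t\) ending at \(i\). Within a strongly connected component of cyclicity \(c\), for large \(t\) the set of walk origins of length \(t\) and \(t+c\) coincide (the standard primitivity/Wielandt-type lemma for the \(c\)-th power restricted to each class). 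Together with the inductive bound for the upstream inputs, this should give period dividing \(\operatorname{lcm}(c_C,\text{upstream lcm})\). It follows that \(k\) divides the lcm of the cyclicities of the eligible components, and taking \(p\)-parts gives a single component with \(p^a\mid c_C\). The delicate point, which I would check first, is the stabilisation of walk sets at exact length: on a periodic orbit we can replace \(t\) by \(t+Nk\) for large \(N\), which makes the ``large \(t\)'' hypothesis harmless.
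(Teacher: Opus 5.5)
\textbf{The gap is in the sufficiency direction, and it invalidates the criterion itself.}

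The condition you test says that every prime power exactly dividing $k$ divides the cyclicity of some unkilled nontrivial component. Equivalently, $k$ divides the lcm of these cyclicities. This condition is necessary, and your necessity direction is true. It is not sufficient. Once an upstream component oscillates, its zeros enter every downstream component in a full residue class of times modulo $\gcd(d,c_{C'})$, where $d$ is the upstream period and $c_{C'}$ is the downstream cyclicity. This flooding can erase the downstream oscillation entirely.

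Take $f(x_a,x_b,x_p,x_q,x_r)=(x_b,\,x_a,\,x_r\wedge x_a,\,x_p,\,x_q)$ and $k=6$. The components $\{a,b\}$ and $\{p,q,r\}$ have cyclicities $2$ and $3$, and nothing is killed, so your test accepts. Now check the three possibilities for $(x_a,x_b)$:
\begin{itemize}
\item If $x_a=x_b=1$, the block $\{p,q,r\}$ is a pure rotation, with periods $1$ and $3$.
\item If $x_a=x_b=0$, the block is driven to $000$.
\item If $x_a\ne x_b$, then on a periodic orbit $x_a=0$ at all times of one parity. So $x_p=0$ at all times of the other parity, and after one trip around the $3$-cycle, $x_p=0$ at both parities. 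Hence $\{p,q,r\}$ is identically $0$ and the period is $2$.
\end{itemize}
The exact periods are therefore only $1,2,3$, and $\Phi_6(f)=\emptyset$.

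\textbf{Your patch fails at exactly this point.} You propose condensation-minimal witnesses and, failing that, a single lcm-period pattern on the downstream component. This needs $p^a\mid c_C$ to be inherited downstream, and it is not: $2\nmid 3$, so no component can carry the merged pattern.

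Requiring pairwise incomparable witnesses does not repair it either; that version errs in the other direction. Let a $4$-cycle $a_0\to a_1\to a_2\to a_3\to a_0$ feed a $6$-cycle $p_0\to\cdots\to p_5\to p_0$ through $f_{p_0}=x_{p_5}\wedge x_{a_0}$. Set $a_i=0$ at time $t$ iff $t\equiv i\pmod 4$, and $p_j=0$ at time $t$ iff $(t-j)\bmod 6\in\{0,1,3,5\}$. This is a consistent orbit: the zeros injected by $a_0$ arrive at odd times, where $p_0$ is already $0$. Its exact period is $12$, although no single component has cyclicity divisible by $12$.

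A correct structural criterion would have to track, along the condensation, which residue classes are flooded and which cyclic patterns survive. You have neither formulated nor proved such a statement.

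The paper avoids any closed-form description of the achievable periods. For fixed $k$, it enumerates one local disagreement witness for each inequality $f^\ell(x)\ne x$ with $\ell<k$, giving at most $O(n^{2(k-1)})$ tuples. It decides each tuple together with $f^k(x)=x$ by computing a largest successor-closed set and running an SCC/reachability test in the dependency graph $H_k$ of $f^k$.
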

	\begin{proof}
		Recall that every \(\ConjClass\)-function is either \(0\), \(1\), or a conjunction
		of variables. Since the basis is fixed, each coordinate circuit is
		normalised to this form in polynomial time by a bottom-up pass: each basis
		gate has a precomputed \(0\), \(1\), or conjunction-of-selected-arguments
		description, and applying it to already normalised inputs only propagates
		constants and takes unions of input sets. If the \(v\)-th coordinate is not
		identically \(0\), write \(\operatorname{In}(v)\) for its input set and
		\[
		f_v(x)=\bigwedge_{u\in \operatorname{In}(v)}x_u,
		\]
		with the empty conjunction interpreted as \(1\). We compute the same
		representation for \(f_v^t\), \(1\le t\le k\): either \(f_v^t\equiv0\), or
		\(f_v^t(x)=\bigwedge_{u\in \operatorname{In}_t(v)}x_u\). The induction step
		substitutes the already computed representations of the coordinates in
		\(\operatorname{In}(v)\). If one substituted coordinate is identically
		\(0\), then \(f_v^{t+1}\equiv0\); otherwise
		\[
		\operatorname{In}_{t+1}(v)=
		\bigcup_{w\in \operatorname{In}(v)}\operatorname{In}_t(w).
		\]
		With input sets stored as \(n\)-bit vectors, the initial normalisation costs
		\(O(sn)\), and the computation of \(f,\ldots,f^k\) costs
		\(O\left(kn^3\right)\), since for each \(t\le k\) and each coordinate at
		most \(n\) length-\(n\) sets are unioned. Let \(H_t=([n],E_t)\) be
		the dependency graph of
		\(f^t\), with \((u,v)\in E_t\) iff
		\(u\in \operatorname{In}_t(v)\); if \(f_v^t\equiv0\), no edge is put into
		\(v\).

		For conjunctive maps, fixed-point conditions are naturally expressed through
		the zero set \(Z=\{v:x_v=0\}\). Let
		\(Z_0^k=\{v:f_v^k\equiv0\}\) be the coordinates forced to zero by the
		\(k\)-th iterate. Then \(f^k(x)=x\) if and only if
		\[
		Z_0^k\subseteq Z,\qquad
		Z \text{ is successor-closed in } H_k,\qquad
		\operatorname{In}_k(v)\cap Z\neq\emptyset
		\text{ for every }v\in Z\setminus Z_0^k .
		\]
		Here successor-closed means that every out-neighbour in \(H_k\) of a vertex
		in \(Z\) also belongs to \(Z\). Indeed, a coordinate equal to \(1\) remains
		\(1\) exactly when none of its inputs in \(H_k\) is zero, while a nonconstant
		coordinate equal to \(0\) remains \(0\) exactly when at least one input is
		zero.

		This criterion characterises configurations fixed by \(f^k\). To obtain
		exact period \(k\), we must also exclude every smaller period: for each
		\(\ell=1,\ldots,k-1\), at least one coordinate must satisfy
		\(f_v^\ell(x)\neq x_v\). To record a possible coordinate mismatch, we use a
		pair \((R,W)\) of vertex sets. A zero set \(Z\) realises the pair
		\((R,W)\) if \(R\subseteq Z\) and \(W\cap Z=\emptyset\); equivalently, the
		vertices in \(R\) are assigned value \(0\) and the vertices in \(W\) are
		assigned value \(1\). For a fixed \(\ell\), define
		\(A_\ell\) to be the list of all local witnesses for such a disagreement,
		over all coordinates \(v\):
		\[
		(\emptyset,\{v\})
		\quad\text{if } f_v^\ell\equiv0,
		\]
		\[
		(\{u\},\{v\})
		\quad\text{for each }u\in \operatorname{In}_\ell(v)
		\quad\text{if } f_v^\ell\not\equiv0,
		\]
		and
		\[
		(\{v\},\operatorname{In}_\ell(v))
		\quad\text{if } f_v^\ell\not\equiv0.
		\]
		These three types exactly describe the ways in which the \(v\)-th
		coordinate of \(f^\ell(x)\) can differ from \(x_v\). Let \(Z=\{u:x_u=0\}\).
		If \(f_v^\ell\equiv0\), then \(f_v^\ell(x)\neq x_v\) holds exactly when
		\(x_v=1\), that is, when \(v\notin Z\); this is the witness
		\((\emptyset,\{v\})\). If
		\(f_v^\ell(x)=\bigwedge_{u\in\operatorname{In}_\ell(v)}x_u\), there are two
		possibilities. First, \(x_v=1\) but the conjunction is \(0\); this is
		equivalent to \(v\notin Z\) and \(u\in Z\) for at least one
		\(u\in\operatorname{In}_\ell(v)\), giving a witness
		\((\{u\},\{v\})\). Second, \(x_v=0\) but the conjunction is \(1\); this is
		equivalent to \(v\in Z\) and
		\(\operatorname{In}_\ell(v)\cap Z=\emptyset\), giving the witness
		\((\{v\},\operatorname{In}_\ell(v))\). Conversely, realising any one of
		these witnesses forces the corresponding coordinate to disagree. Hence, for
		the configuration with zero set \(Z\), realising some witness in \(A_\ell\)
		is equivalent to \(f^\ell(x)\neq x\).

		As a small example, if \(f_v^\ell=x_a\wedge x_b\), then the witnesses
		attached to coordinate \(v\) are
		\((\{a\},\{v\})\), \((\{b\},\{v\})\), and \((\{v\},\{a,b\})\). They represent,
		respectively, \(x_a=0,x_v=1\), \(x_b=0,x_v=1\), and
		\(x_v=0,x_a=x_b=1\), which are exactly the three ways in which
		\(x_a\wedge x_b\) can differ from \(x_v\). Each coordinate contributes at most
		\(\left|\operatorname{In}_\ell(v)\right|+2\le n+2\) witnesses, so
		\(|A_\ell|=O\left(n^2\right)\).

		It remains to combine the inequalities \(f^\ell(x)\neq x\), \(\ell<k\),
		with the fixed-point criterion for \(f^k\). The algorithm considers all
		choices of witnesses, one pair \((R_\ell,W_\ell)\in A_\ell\) for each
		\(\ell=1,\ldots,k-1\). For such a choice, put
		\[
		R=Z_0^k\cup\bigcup_{\ell=1}^{k-1}R_\ell,
		\qquad
		W=\bigcup_{\ell=1}^{k-1}W_\ell .
		\]
		The set \(R\) collects all coordinates required to be zero: the coordinates
		forced to zero by \(f^k\), together with the zero requirements from the
		selected witnesses. The set \(W\) collects all coordinates required to be
		one by these witnesses. Call this witness choice feasible if
		\(R\cap W=\emptyset\) and there exists a zero set \(Z\) such that
		\(R\subseteq Z\), \(Z\cap W=\emptyset\), and \(Z\) satisfies the
		fixed-point criterion. Such a \(Z\) realises all selected witnesses
		simultaneously and satisfies \(f^k(x)=x\); therefore it gives
		\(f^\ell(x)\neq x\) for every \(\ell<k\). Conversely, if
		\(x\in\Phi_k(f)\), then for each \(\ell<k\) some coordinate of
		\(f^\ell(x)\) disagrees with \(x\), so the exactness condition selects one
		witness from \(A_\ell\) that is realised by the zero set of \(x\).

		It remains to test feasibility under the assumption \(R\cap W=\emptyset\).
		Define
		\[
		U=\left\{v\in[n]\;\middle|\;
		\text{no vertex of }W\text{ is reachable from }v\text{ in }H_k\right\}.
		\]
		Reachability includes paths of length \(0\). Then \(U\) is the largest
		successor-closed set disjoint from \(W\). Hence every zero set satisfying the
		feasibility conditions is contained in \(U\); if \(R\not\subseteq U\), no such
		zero set exists. Assume from now on that \(R\subseteq U\).

		Write \(H_k[U]\) for the subgraph of \(H_k\) induced by \(U\).
		In \(H_k[U]\), call a strongly connected component cyclic if it has more than
		one vertex or has a loop. Let \(Y_0\) be the union of \(Z_0^k\) and all cyclic
		strongly connected components of \(H_k[U]\), and let \(Y\) be the set of
		vertices reachable from \(Y_0\) inside \(H_k[U]\). We claim that the witness
		choice is feasible if and only if \(R\subseteq Y\).

		For necessity, let \(Z\subseteq U\) satisfy the feasibility conditions.
		Vertices in \(Z\cap Z_0^k\) already lie in \(Y_0\subseteq Y\). Now take
		\(v\in Z\setminus Z_0^k\). By the predecessor condition in the fixed-point
		criterion, there is an edge \(u\to v\) in \(H_k\) with \(u\in Z\). Repeating
		this step inside the finite set \(Z\) either reaches \(Z_0^k\) or repeats a
		vertex. In the first case, the resulting path runs forward from
		\(Z_0^k\subseteq Y_0\) to \(v\). In the second, the repeated segment is a
		directed cycle in \(H_k[U]\), hence lies in a cyclic component included in
		\(Y_0\), and the remaining path again runs forward to \(v\). Thus every zero
		set satisfying the feasibility conditions is contained in \(Y\), so
		\(R\subseteq Y\) is necessary.

		Conversely, suppose that \(R\subseteq Y\). We show that \(Y\) can be used as
		the required zero set. Since \(Y\subseteq U\), no vertex of \(W\) lies in
		\(Y\), and the selected witnesses are realised because \(R\subseteq Y\).
		Also \(Z_0^k\subseteq Y_0\subseteq Y\). The set \(Y\) is successor-closed in
		\(H_k\): a successor in \(H_k\) of a vertex of \(Y\) cannot leave \(U\), and
		inside \(H_k[U]\) it is again reachable from \(Y_0\). Finally, every vertex
		\(v\in Y\setminus Z_0^k\) has an incoming edge from a vertex of \(Y\). If
		\(v\) lies in a cyclic component included in \(Y_0\), take the
		predecessor inside that component; otherwise take the previous vertex on a
		path in \(H_k[U]\) from \(Y_0\) to \(v\). Thus \(Y\) satisfies the
		fixed-point criterion and realises the selected witnesses.

		Thus \(R\subseteq Y\) decides feasibility of the selected witnesses. This
		gives the following decision procedure.
		\begin{enumerate}
			\item Normalise every coordinate circuit to \(0\), \(1\), or a conjunction
			of variables. For \(t=1,\ldots,k\), compute the same representation for
			\(f^t\). From these representations construct \(H_k\), \(Z_0^k\), and
			the witness lists \(A_1,\ldots,A_{k-1}\).
			\item For each tuple
			\[
			\bigl((R_1,W_1),\ldots,(R_{k-1},W_{k-1})\bigr)
			\in A_1\times\cdots\times A_{k-1},
			\]
			set
			\(R=Z_0^k\cup\bigcup_{\ell=1}^{k-1}R_\ell\) and
			\(W=\bigcup_{\ell=1}^{k-1}W_\ell\). If \(R\cap W\neq\emptyset\), continue
			to the next tuple. Compute the largest successor-closed set \(U\) in
			\(H_k\) that is disjoint from \(W\). If \(R\not\subseteq U\), continue to
			the next tuple. Inside \(H_k[U]\), let \(Y_0\) be the union of \(Z_0^k\)
			and all cyclic strongly connected components, and let \(Y\) be the set
			of vertices reachable from \(Y_0\). If \(R\subseteq Y\), accept;
			otherwise continue to the next tuple.
			\item If the loop over witness tuples finishes without accepting, reject.
		\end{enumerate}
		By this equivalence, the procedure accepts if and only if
		\(\Phi_k(f)\neq\emptyset\). There are at most
		\(O\left(n^{2(k-1)}\right)\) witness choices. Constructing \(H_k\),
		\(Z_0^k\), and the lists \(A_1,\ldots,A_{k-1}\) from the computed
		input-set representations costs \(O\left(kn^2\right)\), which is dominated
		by the \(O\left(kn^3\right)\) computation of the iterates. For each witness
		choice, \(U\) is computed by one reverse reachability search in \(H_k\), and
		\(Y\) is computed by strongly connected components in \(H_k[U]\) followed
		by reachability from \(Y_0\). Since \(H_k\) has \(n\) vertices and at most
		\(n^2\) edges, one feasibility test takes \(O\left(n^2\right)\) time, and
		the whole witness loop takes \(O\left(n^{2k}\right)\) time. Combining this
		with the \(O(sn)\) normalisation of the input circuits and the
		\(O\left(kn^3\right)\) computation of \(f,\ldots,f^k\), the running time is
		\(O\left(sn+kn^3+n^{2k}\right)\).
		Therefore, for fixed \(k\),
		\(k\)-PLCE\((\ConjClass)\in \classP\).
	\end{proof}

	\begin{lemma}[disjunctive boundary]\label{lem:s6-p}
		For every fixed integer \(k\ge 2\),
		\(k\)-PLCE\((\DisjClass)\in \classP\).
	\end{lemma}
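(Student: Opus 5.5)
The plan is to reduce \(k\)-PLCE\((\DisjClass)\) to \(k\)-PLCE\((\ConjClass)\) by coordinatewise duality, exactly as Lemma~\ref{lem:f2} was obtained from Lemma~\ref{lem:f6}, and then invoke Lemma~\ref{lem:p6-p}.

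\textbf{Step 1: the two classes are dual.} For a Boolean function \(h\), write \(h^\ast(x)=\neg h(\neg x_1,\ldots,\neg x_n)\). The dual of \(x\vee y\) is \(x\wedge y\), and the duals of the constants \(0\) and \(1\) are \(1\) and \(0\). Hence the dual of a disjunction of variables is the conjunction of the same variables, and \(\DisjClass^\ast=\ConjClass\).

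\textbf{Step 2: the transformation on circuits.} Given a \(\DisjClass\)-network \(f\), define \(g_i(x)=\neg f_i(\neg x)\).
\begin{enumerate}
\item Fix the bases \(\mathcal B_{\DisjClass}\) and \(\mathcal B_{\ConjClass}\). For each gate \(b\in\mathcal B_{\DisjClass}\), its dual \(b^\ast\) lies in \(\ConjClass\); fix one \(\mathcal B_{\ConjClass}\)-circuit for it.
\item Replace every \(b\)-gate in the input circuits by this fixed circuit for \(b^\ast\). Constants are handled the same way, since constant gates dualise to the opposite constants.
\end{enumerate}
Bottom-up induction shows the resulting circuit computes \(g_i\), because duality commutes with superposition: \((b(h_1,\ldots,h_r))^\ast=b^\ast(h_1^\ast,\ldots,h_r^\ast)\). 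The output has size linear in \(s\), and the transformation runs in polynomial time. Alternatively, one can normalise each coordinate directly to \(0\), \(1\), or a disjunction over an input set \(\operatorname{In}(v)\), and dualise that normal form.

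\textbf{Step 3: equivalence of instances.} With \(c(x)=\neg x\), we have \(g\circ c=c\circ f\), and therefore \(g^t\circ c=c\circ f^t\) for all \(t\). Since \(c\) is a bijection, \(f^\ell(x)=x\) holds if and only if \(g^\ell(c(x))=c(x)\). Consequently \(x\in\Phi_k(f)\) if and only if \(c(x)\in\Phi_k(g)\), so \(\Phi_k(f)\neq\emptyset\) exactly when \(\Phi_k(g)\neq\emptyset\).

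\textbf{Step 4: conclusion.} Run the algorithm of Lemma~\ref{lem:p6-p} on \(g\). This gives a polynomial-time procedure for fixed \(k\), with running time \(O(sn+kn^3+n^{2k})\) up to the constant-factor blow-up from Step~2.

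There is no real obstacle. The only point needing care is the circuit-level bookkeeping in Step~2, in particular confirming that the constants dualise within \(\ConjClass\), which holds because \(\ConjClass\) contains both constants.
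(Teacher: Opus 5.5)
Your proposal is correct and matches the paper's proof: dualise the \(\DisjClass\)-network coordinatewise into a \(\ConjClass\)-network gate by gate, use the conjugacy \(g\circ c=c\circ f\) to transfer exact periods, and apply Lemma~\ref{lem:p6-p}. You also spell out the identity \((b(h_1,\ldots,h_r))^\ast=b^\ast(h_1^\ast,\ldots,h_r^\ast)\) behind the circuit induction, which the paper leaves implicit.
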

	\begin{proof}
		Recall that \(\ConjClass=[\{x\wedge y,0,1\}]\) and
		\(\DisjClass=[\{x\vee y,0,1\}]\). Given an \(\DisjClass\)-network
		\(f=(f_1,\ldots,f_n)\), define the coordinatewise dual network
		\[
		g_i(x_1,\ldots,x_n)
		=
		\neg f_i(\neg x_1,\ldots,\neg x_n),
		\qquad i\in[n].
		\]
		By De Morgan's law, disjunctions become conjunctions and the constants \(0\)
		and \(1\) are interchanged, so \(g\) is a \(\ConjClass\)-network and is computable
		from \(f\) in polynomial time. In the fixed-basis representation, each
		\(\DisjClass\)-basis gate is replaced by the precomputed \(\ConjClass\)-circuit for its
		coordinatewise dual, so the output instance is again admissible.

		Let \(c(x_1,\ldots,x_n)=(\neg x_1,\ldots,\neg x_n)\). By construction,
		\(g(c(x))=c(f(x))\), so \(g\circ c=c\circ f\). Induction gives
		\(g^t(c(x))=c(f^t(x))\) for every \(t\ge1\). Hence \(x\in\Phi_k(f)\) if and
		only if \(c(x)\in\Phi_k(g)\), and therefore
		\(\Phi_k(f)\neq\emptyset\) iff \(\Phi_k(g)\neq\emptyset\). The reduction to
		\(k\)-PLCE\((\ConjClass)\) has linear-size dualisation, which is dominated by the
		\(O\left(sn+kn^3+n^{2k}\right)\) bound of the \(\ConjClass\) algorithm; hence
		Lemma~\ref{lem:p6-p} completes the proof.
	\end{proof}

	Table~\ref{tab:tractable-complexity} summarises the running-time bounds for
	the tractable boundary algorithms. Here \(s\) denotes the total size of the
	input coordinate circuits. Since \(k\) is fixed, all three bounds are
	polynomial in the input size.

	\begin{table}[H]
		\centering
		\begingroup
		\small
		\setlength{\tabcolsep}{4pt}
		\begin{tabular}{>{\raggedright\arraybackslash}p{0.14\textwidth}
			>{\raggedright\arraybackslash}p{0.42\textwidth}
			>{\raggedright\arraybackslash}p{0.28\textwidth}}
			\hline
			Class & Main operations & Running time \\
			\hline
			\(\AffClass\) &
			\(O(sn)\) conversion to an affine map; \(O(kn^3)\) matrix powers and
			Gaussian eliminations. &
			\(O\left(sn+kn^3\right)\) \\
			\(\ConjClass\) &
			\(O(sn)\) normalisation to conjunctions; \(O(kn^3)\) computation of
			\(f,\ldots,f^k\); \(O\left(n^{2(k-1)}\right)\) witness tuples, each
			tested in \(O\left(n^2\right)\) time. &
			\(O\left(sn+kn^3+n^{2k}\right)\) \\
			\(\DisjClass\) &
			Linear-size coordinatewise dualisation followed by the \(\ConjClass\)
			preprocessing and witness test. &
			\(O\left(sn+kn^3+n^{2k}\right)\) \\
			\hline
		\end{tabular}
		\endgroup
		\caption{Running-time bounds for the three tractable boundary algorithms for
		fixed period \(k\). Here \(n\) is the number of network coordinates and
		\(s\) is the total size of the input coordinate circuits. Since \(k\) is
		fixed, each displayed bound is polynomial in the input size.}
		\label{tab:tractable-complexity}
	\end{table}

	\subsection{Lifting from boundary classes}

	It remains to lift the boundary results to all closed local rule classes. The
	first structural fact is monotonicity with respect to the local rule class. We
	write \(\le_{\classP}\) for polynomial-time many-one
	reducibility~\cite[Chapter~2]{gareyJohnson1979}.

	\begin{lemma}\label{lem:monotonicity}
		Let \(\mathcal F_1\) and \(\mathcal F_2\) be sets of Boolean functions such
		that \(\mathcal F_1 \subseteq \mathcal F_2\). Then
		\[
		k\text{-PLCE}(\mathcal F_1)
		\le_{\classP}
		k\text{-PLCE}(\mathcal F_2).
		\]
	\end{lemma}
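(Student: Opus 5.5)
The reduction will be the identity map on the underlying network, with each circuit rewritten gate by gate. An instance of \(k\)-PLCE\((\mathcal F_1)\) is a network \(f=(f_1,\ldots,f_n)\) with each \(f_i\in[\mathcal F_1]\), given by circuits over \(\mathcal F_1\). Because \(\mathcal F_1\subseteq\mathcal F_2\), we have \([\mathcal F_1]\subseteq[\mathcal F_2]\), so the same coordinate functions are admissible for \(k\)-PLCE\((\mathcal F_2)\). The reduction therefore outputs the same network, and \(\Phi_k(f)\) is unchanged. Correctness in both directions is then trivial, since the question \(\Phi_k(f)\neq\emptyset\) depends only on the global map \(f\) and not on its representation.

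The only point that needs checking is the encoding, so I would carry it out in the following order.

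\emph{Case where both sets are the literal gate bases.} Every \(\mathcal F_1\)-gate is already an \(\mathcal F_2\)-gate. The input circuit is then verbatim a circuit over \(\mathcal F_2\), with the same inputs among \(x_1,\ldots,x_n\) and repeated inputs still allowed. Hence the output is computable in linear time.

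\emph{Case where one works with closed classes and their fixed finite bases.} This is the notation \(k\)-PLCE\((\mathcal C)\) with bases \(\mathcal B_{\mathcal C}\), where the bases need not be nested. For each of the finitely many gates \(b\in\mathcal B_{\mathcal C_1}\), fix once and for all a circuit over \(\mathcal B_{\mathcal C_2}\) computing \(b\). Such a circuit exists because \(b\in\mathcal C_1\subseteq\mathcal C_2=[\mathcal B_{\mathcal C_2}]\). Substituting this circuit for every occurrence of \(b\) gives, by bottom-up induction, a circuit for the same coordinate function, with size larger by at most a constant factor.

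\emph{Constants.} If a constant was used in the source circuit, it lies in \(\mathcal C_1\subseteq\mathcal C_2\), so the paper's convention for substituting constants remains admissible in the target.

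The main obstacle, mild as it is, is this representation issue: making sure that the translation constants depend only on the fixed families and not on the input. This holds because both bases are fixed before the input is given. It is exactly the basis-independence remark already made in the input-encoding discussion, invoked here for a pair of nested classes instead of equal ones. Since the translation is polynomial (indeed linear) and preserves the global map, it is a \(\le_{\classP}\) reduction.
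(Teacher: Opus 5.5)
Your proposal is correct and follows the paper's proof. Both arguments use \([\mathcal F_1]\subseteq[\mathcal F_2]\), replace each source-basis gate by a fixed circuit over the target basis, and observe that the global map, and hence \(\Phi_k(f)\), is unchanged while the size grows by only a constant factor. You also separate the verbatim case of nested gate sets from the non-nested fixed-basis case, which is the case actually used in the proof of Theorem~\ref{thm:post-dichotomy}; this spells out what the paper compresses into its appeal to the fixed-basis convention.
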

	\begin{proof}
		The inclusion \(\mathcal F_1\subseteq\mathcal F_2\) implies
		\([\mathcal F_1]\subseteq[\mathcal F_2]\), because both closures are formed by
		the same closure operations: superposition and introduction of fictitious
		variables.
		By the fixed-basis convention from the input encoding, each source-basis gate has a fixed
		circuit over the target basis. Replacing all source gates by these fixed
		circuits gives an admissible target instance computing the same global map,
		with constant-factor growth in size. The answer is therefore preserved.
	\end{proof}

	The final ingredient is a standard separation statement for closed
	Boolean-function classes.

	\begin{lemma}[closed-class separation~\cite{lau2006}]
		\label{lem:post-separation}
		Let \(\mathcal C\) be a closed class of Boolean functions. If
		\(\mathcal C\) contains none of \(\MajClass\), \(\OrAndClass\), and
		\(\AndOrClass\), then
		\[
		\mathcal C\subseteq \AffClass
		\quad\mathrm{or}\quad
		\mathcal C\subseteq \ConjClass
		\quad\mathrm{or}\quad
		\mathcal C\subseteq \DisjClass.
		\]
	\end{lemma}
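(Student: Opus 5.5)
The plan is to read the statement off Post's lattice of clones \cite{post1941,lau2006} rather than argue from scratch. First I identify the six boundary classes with Post's labels: \(\MajClass=D_2\), \(\AffClass=L\), \(\ConjClass=E\), \(\DisjClass=V\). For the mixed classes, \(\AndOrClass=[x\wedge(y\vee z)]\) and \(\OrAndClass=[x\vee(y\wedge z)]\) are the bottom members \(S_{10}\) and \(S_{00}\) (with the appropriate labelling convention) of the two infinite chains of monotone \(0,1\)-preserving classes bounded by an argument. Since the lemma concerns closed classes only, it becomes a purely combinatorial statement about the finite-plus-two-infinite-chains diagram. Let \(\mathcal D\) be the set of closed classes contained in none of \(L,E,V\). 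The lemma is equivalent to the claim that \(\mathcal D\) is an up-set whose minimal elements are exactly \(D_2\), \(S_{00}\) and \(S_{10}\).

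Next I would check this in two parts. The up-set property is immediate: if \(\mathcal C\subseteq\mathcal C'\) and \(\mathcal C'\subseteq L\), then \(\mathcal C\subseteq L\), and likewise for \(E\) and \(V\). For the minimal elements, I would take the down-sets of \(L\), \(E\) and \(V\) from Lau's inclusion diagram. These are \(L,L_0,L_1,L_2,L_3,N,N_2,I,I_0,I_1,I_2\); \(E,E_0,E_1,E_2\) together with the \(I\)'s; and dually for \(V\). I would then list every class covering a member of this union but not itself in it, and show that each such cover contains \(D_2\), \(S_{00}\) or \(S_{10}\).

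The covers are few. Above \(L_2\), \(L_3\), \(N_2\), the self-dual/affine part, the next classes are \(D_2\subseteq D_1\subseteq D\) and \(L\). Above \(E_2\) and \(V_2\), the next non-\(E\), non-\(V\) classes are \(S_{10}\) (above \(E_2\)) and \(S_{00}\) (above \(V_2\)). The top of the \(E\)/\(V\) part is covered only by monotone classes \(M_i\) containing both \(\wedge\) and \(\vee\), and therefore \(\operatorname{maj}_3=(x\wedge y)\vee(y\wedge z)\vee(x\wedge z)\supseteq D_2\). Classes containing both a non-monotone non-affine function and a constant lie above \(R_2=[x\wedge y,\, x\vee y]\)-type classes or \(T_0\), \(T_1\), and again contain \(D_2\) or one of the chain bottoms. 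I would verify the containments \(D_2\subseteq M_2\) and \(S_{00},S_{10}\subseteq\) all chain members, by exhibiting the generator explicitly in each case, for example \(x\vee(y\wedge z)\in S_{00}^k\) for all \(k\).

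The main obstacle is the two infinite chains \(S_{0}^k,S_{00}^k,S_{01}^k,S_{02}^k\) and their duals, where a finite case check is not enough. My approach is to argue uniformly. Every class in these chains consists of functions that are \(0\)-separating of degree \(k\), or the dual property, and that preserve \(0\) and \(1\) or contain constants; and each chain is descending with intersection \(S_{00}\), \(S_{01}\), and so on. It therefore suffices to show that the bottom classes \(S_{00}\), \(S_{01}\), \(S_{02}\), \(S_0\) all contain \(S_{00}=[x\vee(y\wedge z)]\), or \(S_{10}\) for the dual chains. That follows because \(x\vee(y\wedge z)\) is monotone, preserves both constants, and is bounded below by \(x\), which places it in every class of that family by Lau's description. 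With this, every class outside the down-sets of \(L,E,V\) contains one of the three hard boundary classes, which is the contrapositive of the lemma.
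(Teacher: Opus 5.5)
Your plan is correct and follows the paper's own route. The paper's proof consists only of citing Lau's inclusion diagram of Post's lattice, and your reading of it is the right one: with \(\MajClass=D_2\), \(\OrAndClass=S_{00}\), \(\AndOrClass=S_{10}\), every closed class outside the down-sets of \(L,E,V\) is either one of \(\mathrm{BF},R_0,R_1,R_2,M,M_0,M_1,M_2,D,D_1,D_2\), all of which contain \(\operatorname{maj}_3\), or a member of the \(S\)-families, which contain \(x\vee(y\wedge z)\) or \(x\wedge(y\vee z)\). When you write it out, fix the cover list: \([x\wedge y,\,x\vee y]\) is \(M_2\), not \(R_2\); \(D_2\) lies above none of \(L_2,L_3,N_2\), whose non-affine covers are \(D_1\) and \(D\); and \(S_1=[x\wedge\neg y]\) contains \(0\) and a non-monotone non-affine function yet lies above none of \(M_2,R_0,R_1\), so it, like the covers \(S_{11}\) of \(E_0\) and \(S_{01}\) of \(V_1\) that your list omits, is caught only by your uniform argument for the \(S\)-families.
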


	Together with the boundary lemmas, Lemmas~\ref{lem:monotonicity} and
	\ref{lem:post-separation} lift the six boundary results to all closed classes.

	\begin{proof}[Proof of Theorem~\ref{thm:post-dichotomy}]
		Let \(\mathcal C\) be a closed Boolean-function class.
		If \(\mathcal C\) contains \(\MajClass\), \(\OrAndClass\), or \(\AndOrClass\), then
		one of the \(\classNP\)-complete boundary problems reduces to
		\(k\)-PLCE\((\mathcal C)\) by Lemma~\ref{lem:monotonicity}. The target
		problem is in \(\classNP\): since \(k\) is fixed, a candidate configuration
		is verified by computing its first \(k\) iterates. Hence
		\(k\)-PLCE\((\mathcal C)\) is \(\classNP\)-complete.

		If \(\mathcal C\) contains none of \(\MajClass\), \(\OrAndClass\), and
		\(\AndOrClass\), then Lemma~\ref{lem:post-separation} places
		\(\mathcal C\) inside one of the tractable boundary classes \(\AffClass\),
		\(\ConjClass\), or \(\DisjClass\). The fixed-basis substitution used in
		Lemma~\ref{lem:monotonicity} converts the input representation to the
		corresponding boundary basis, after which the boundary algorithm applies.
		Thus
		\(k\)-PLCE\((\mathcal C)\) is polynomial-time solvable. The dichotomy
		follows.
	\end{proof}

	\subsection{Use of AI-assisted tools}

	OpenAI ChatGPT and Codex were used for language editing, stylistic polishing,
	formatting support, and assistance in drafting and refining the TikZ layouts
	for the figures in the article. The authors reviewed and edited the manuscript
	after using these tools and take full responsibility for the content of the
	article.

	\section*{Data availability}

	No datasets were generated or analysed during the current study. The formal
	constructions and algorithms needed to reproduce the results are contained in
	this article.

	\begingroup
	\setlength{\emergencystretch}{2em}
	\printbibliography
	\endgroup

	\section*{Acknowledgements}

	This work was supported by the Ministry of Economic Development of the Russian
	Federation in accordance with the subsidy agreement (agreement identifier
	000000C313925P4H0002; grant No. 139-15-2025-012).

	\section*{Author contributions}

	A.D. contributed to investigation, visualization, and writing--original draft.
	G.B. contributed to methodology, supervision, funding acquisition, and
	writing--review and editing. Both authors reviewed and approved the final
	manuscript.

	\section*{Competing interests}

	The authors declare no competing interests.

\end{document}